\def\matrixDraw[#1]#2#3{
\draw[thick] (-0.2,-0.25)--(-0.3,-0.25)--(-0.3,#2*0.5+0.25)--(-0.2,#2*0.5+0.25);
\draw[thick] (#3*0.5+0.2,-0.25)--(#3*0.5+0.3,-0.25)--(#3*0.5+0.3,#2*0.5+0.25)--(#3*0.5+0.2,#2*0.5+0.25);
\foreach \i in {0,...,#2}
{
  \foreach \j in {0,...,#3}
  {\pgfmathparse{#1[\i][\j]} \let\a=\pgfmathresult
\node at (0.5*\j,0.5*\i) (a\i\j) {\a };
}
}
}
\def\NAT@spacechar{~}
\newcommand{\dash}{\nobreakdash-\hspace{0pt}}
\newcommand{\clName}{cluster\xspace}
\newcommand{\R}{\mathbb{R}}
\newcommand{\Z}{\mathbb{Z}}
\newcommand{\N}{\mathbb{N}}
\newcommand{\I}{\mathcal{I}}
\newcommand{\J}{\mathcal{J}}
\newcommand{\calL}{\mathcal{L}}
\renewcommand*{\sectionautorefname}{Section}
\renewcommand*{\subsectionautorefname}{Section}
\newcommand{\LpCoCluProb}[1][]{%
  \ifthenelse{\isempty{#1}}%
    {\textsc{Co-Clustering}$_\calL$\xspace}
    {\textsc{Co-Clustering}$_{#1}$\xspace}
}
\newcommand{\CoClu}[1]{($#1$)-co-clustering\xspace}
\newcommand{\klpCoCluProb}[3]{$(#1,#2)$-\LpCoCluProb[#3]}
\newcommand{\matrixI}{\mathcal{A}}
\DeclareMathOperator{\cost}{cost_\infty}
\newcommand{\Sat}{\textsc{CNF-SAT}\xspace}
\newcommand{\occ}[2]{\sharp^{#1}_{#2}}
\newcommand{\problemdef}[3]{
  \vspace{10pt}\hfill
    \begin{minipage}{0.9\textwidth}%
      \textsc{#1}
      \begin{compactenum}
       \item[\textbf{Input:}] #2
    \item[\textbf{Question:}] #3
      \end{compactenum}
    \end{minipage}%
\vspace{10pt}%
}
\newcommand{\optProblemDef}[3]{
 \vspace{10pt}\hfill
    \begin{minipage}{0.9\textwidth}%
      \textsc{#1}
      \begin{compactenum}
       \item[\textbf{Input:}] #2
       \item[\textbf{Task:}] #3
      \end{compactenum}
    \end{minipage}%
\vspace{10pt}%
}
\newtheorem{Theorem}{Theorem}
\newaliascnt{lemma}{Theorem}
\newaliascnt{obs}{Theorem}
\newaliascnt{cor}{Theorem}
\newaliascnt{prop}{Theorem}
\newaliascnt{claim}{Theorem}
\newtheorem{Lemma}[lemma]{Lemma}
\newtheorem{Corollary}[cor]{Corollary}
\newtheorem{Observation}[obs]{Observation}
\newtheorem{Claim}[claim]{Claim}
\title{Co-Clustering Under the Maximum Norm\footnote{A preliminary version appeared in the proceedings of the 25th International Symposium on Algorithms and Computation (ISAAC' 14), LNCS~8889, pp.~298--309. Springer, 2014. This version contains all proofs in full detail and discusses some experimental findings.}}
\author[1]{Laurent Bulteau\thanks{Supported by the Alexander von Humboldt
Foundation, Bonn, Germany, during a postdoctural stay at TU Berlin.}}
\affil[1]{IGM-LabInfo, CNRS UMR 8049, Universit\'e Paris-Est Marne-la-Vall\'ee, France, \texttt{Laurent.Bulteau@u-pem.fr}}
\author[2]{Vincent Froese\thanks{Supported by Deutsche Forschungsgemeinschaft, project DAMM (NI 369/13).}}
\affil[2]{Institut f\"ur Softwaretechnik und Theoretische Informatik, TU Berlin, Germany, \texttt{\{vincent.froese, sepp.hartung, rolf.niedermeier\}@tu-berlin.de}}
\author[2]{Sepp Hartung}
\author[2]{Rolf Niedermeier}
\date{}
\begin{document}

\maketitle
\renewcommand{\sectionautorefname}{Section}
\renewcommand{\subsectionautorefname}{Section}

\begin{abstract}Co-clustering, that is, partitioning a numerical matrix into ``homogeneous'' submatrices,
has many applications ranging from
bioinformatics to election analysis. Many interesting variants of co-clustering are NP-hard.
We focus on the basic variant of co-clustering
where the homogeneity of a submatrix is defined in terms of minimizing the
maximum distance between two entries.
In this context, we spot several NP-hard as well as a number of relevant
polynomial-time solvable special cases, thus charting
the border of tractability for this challenging data clustering problem.
For instance, we provide polynomial-time solvability when having to
partition the rows and columns into two subsets each
(meaning that one obtains four submatrices).
When partitioning rows and columns into three subsets each, however,
we encounter NP-hardness even for input matrices containing only
values from~$\{ 0,1,2\}$.
\end{abstract}

\section{Introduction}

Co-clustering, also known as \emph{biclustering}~\cite{MO04}, performs a
simultaneous clustering of the rows and columns of a data matrix.
Roughly speaking, the problem is, given a numerical input matrix~$\matrixI$,
to partition the rows and columns of~$\matrixI$ into subsets minimizing a given
\emph{cost} function (measuring ``homogeneity'').
For a given subset~$I$ of rows and a subset~$J$ of columns, the
corresponding \emph{\clName} consists of all entries~$a_{ij}$ with $i\in I$
and~$j\in J$.
The cost function usually defines homogeneity in terms of
distances (measured in some norm) between the entries of each \clName.
Note that the variant where \clName{}s are allowed to ``overlap'',
meaning that some rows and columns are contained in multiple \clName{}s,
has also been studied~\cite{MO04}.
We focus on the non-overlapping variant which can be stated as follows.

\optProblemDef{\LpCoCluProb}
  {A matrix~$\matrixI \in \R^{m \times n}$ and two positive integers~$k,\ell \in \N$.}
  {Find a partition of $\matrixI$'s rows into~$k$ subsets and a partition of
    $\matrixI$'s columns into~$\ell$~subsets such that a given cost function
    (defined with respect to some norm~$\calL$) is minimized for the
    corresponding clustering.}
  
Co-clustering is a fundamental paradigm for
unsupervised data analysis. Its applications range from microarrays
and bioinformatics over recommender systems to election
analysis~\cite{ADK12,BDGMM07, TSS05, MO04}. Due to its enormous practical significance,
there is a vast amount of literature discussing various variants;
however, due to the observed NP-hardness of ``almost all interesting
variants''~\cite{MO04}, most of the literature deals with heuristic,
typically empirically validated algorithms.
Indeed, there has been very active research on co-clustering in terms
of heuristic algorithms while there is little substantial theoretical work for this important clustering problem.
Motivated by an effort towards a deeper theoretical analysis as started by \citet{ADK12},
we further refine and strengthen the theoretical investigations on the computational complexity of a natural special case of \LpCoCluProb, namely we study the case of~$\calL$ being the maximum norm~$L_\infty$, where the problem comes down to minimizing the
maximum distance between entries of a \clName{}.
This cost function might be a reasonable choice in practice due to its outlier sensitivity. In network security, for example, there often exists a vast amount of ``normal'' data points whereas there are only very few ``malicious'' data points, which are outliers with respect to certain attributes. 
The maximum norm does not allow to put entries with large differences into the same cluster,
which is crucial for detecting possible attacks.
The maximum norm can also be applied in a discretized setting, where input values are grouped (for example, replaced by integer values) according to their level of deviation from the mean of the respective attribute. It is then not allowed to put values of different ranges of standard deviation into the same cluster.
Last but not least, we study an even more restricted clustering version, where the partitions
of the rows and columns have to contain consecutive subsets.
This version subsumes the problem of feature discretization, which is used as a preprocessing technique in data mining applications~\cite{NS95, CN98, Ngu06}. See \autoref{sec:consec-blocks} for this version.

\citet{ADK12} provided a thorough analysis of the polynomial-time
approximability of \LpCoCluProb (with respect to~$L_p$-norms), presenting several
constant-factor approximation algorithms.
While their algorithms are almost straightforward, relying on one-dimensionally
clustering first the rows and then the columns, their main contribution
lies in the sophisticated mathematical analysis of the corresponding
approximation factors.
Note that \citet{JSB09} further generalized this approach to
higher dimensions, then called \emph{tensor clustering}.
In this work, we study (efficient) \emph{exact} instead of approximate
solvability. To this end, by focussing on \LpCoCluProb[\infty], we investigate a scenario that is combinatorially easier to grasp.
In particular, our exact and combinatorial polynomial-time algorithms exploit structural
properties of the input matrix and do not solely
depend on one-dimensional approaches.

\paragraph{Related Work.}
Our main point of reference is the work of \citet{ADK12}.
Their focus is on polynomial-time approximation
algorithms, but they also provide computational hardness results.
In particular, they point to challenging open questions concerning
the cases $k=\ell=2$, $k=1$, or binary input matrices.
Within our more restricted setting using the maximum norm,
we can resolve parts of these questions.
The survey of \citet{MO04}\footnote{According to Google Scholar, accessed December~2015,
cited more than 1500~times.}
provides an excellent overview on
the many variations of \LpCoCluProb, there called biclustering,
and discusses many applications in bioinformatics and beyond.
In particular, they also discuss Hartigan's~\cite{Har72} special case where the goal
is to partition into uniform \clName{}s
(that is, each \clName{} has only one entry value).
Our studies indeed generalize this very puristic scenario
by not demanding completely uniform \clName{}s (which would correspond to
\clName{}s with maximum entry difference~0) but allowing some variation
between maximum and minimum \clName{} entries.
\citet{CST00} aim at clusterings where in each submatrix
the distance between entries within each row and within each column is
upper-bounded.
Recent work by \citet{WUB13} considers a so-called
``monochromatic'' biclustering where the cost for each submatrix
is defined as the number of minority entries. For binary data,
this clustering task coincides with the~$L_1$-norm version of
co-clustering as defined by \citet{ADK12}.
\citet{WUB13} show NP-hardness of monochromatic biclustering for
binary data with an additional third value denoting missing entries
(which are not considered in their cost function) and give a
randomized polynomial-time approximation scheme~(PTAS).
\citet{Feige14} gave a proof of NP-hardness of \LpCoCluProb[L_1] with $k = 2$ and $\ell = n$ for binary matrices.
Note that in contrast to the $L_1$-version, we observe that the $L_\infty$-version is easily solvable on binary matrices.

\paragraph{Our Contributions.}
In terms of defining ``cluster homogeneity'',
we focus on minimizing the maximum distance between two entries within
a \clName{} (maximum norm).
\autoref{tab:results} surveys most of our results.
Our main conceptual contribution is to provide a seemingly first study
on the exact complexity of a natural special case of \LpCoCluProb[],
thus potentially stimulating a promising field
of research.

Our main technical contributions are as follows.
Concerning the computational intractability results with respect to
even strongly restricted cases, we put a lot of effort in finding the
``right'' problems to reduce from in order to make the reductions as
natural and expressive as possible, thus making non-obvious
connections to fields such as geometric set covering. Moreover,
seemingly for the first time in the context of co-clustering, we
demonstrate that the inherent NP-hardness does not stem
from the permutation combinatorics behind: the problem remains
NP-hard when all clusters must consist of consecutive rows or columns.
This is a strong constraint (the search space size is tremendously 
reduced---basically from~$k^m \cdot \ell^n$ to 
$\binom{m}{k}\cdot\binom{n}{\ell}$) which directly gives 
a polynomial-time algorithm for~$k$ and~$\ell$ being constants.
Note that in the general case we have NP-hardness for constant~$k$
and~$\ell$.
Concerning the algorithmic results, we develop a novel
reduction to SAT solving (instead of the standard reductions to
integer linear programming).
Notably, however, as opposed to previous work on polynomial-time approximation algorithms
\cite{ADK12,JSB09}, our methods seem to be tailored for the two-dimensional case
(co-clustering) and the higher dimensional case (tensor clustering)
appears to be out of reach.

\setlength{\tabcolsep}{1em}
\newcommand{\para}{$\circledast$}
\begin{table}[t!]
  \centering
  \caption{Overview of results for \klpCoCluProb{k}{\ell}{\infty}
           with respect to various parameter
           constellations ($m$: number of rows, $n$: number of columns, $|\Sigma|$:
           alphabet size, $k/\ell$: size of row/column
           partition, $c$: cost).
           A \para{} indicates that the corresponding value is considered as a parameter, where FPT means that there is an algorithm solving the problem where the superpolynomial part in the running time is a function depending solely on the parameter.
           Multiple \para{}'s indicate a combined parameterization.
           Other non-constant values may be unbounded.}
  \label{tab:results}
    \begin{tabular}{cccccl}
      \toprule
      $m$ & $|\Sigma|$ & $k$ & $\ell$ & $c$ & Complexity\\
      \midrule
      - & - & - & - & 0 & P [\autoref{thm:c=0-poly}]\\
      - & 2 & - & - & - & P [\autoref{thm:sigma=2-poly}]\\
      - & - & 1 & - & - & P [\autoref{thm:1l-poly}]\\
      - & - & 2 & 2 & - & P [\autoref{thm:22-poly}]\\
      - & - & 2 & $n$ & - & P [\autoref{thm:2n-poly}]\\
      - & 3 & 2 & - & - & P [\autoref{thm:sigma-3-poly}]\\
      - & - & 2 & \para & 1 & FPT [\autoref{cor:2l-FPT}]\\
      - & \para & 2 & - & 1 & FPT [\autoref{cor:2l-FPT}]\\
      \para & - & \para & \para & \para & FPT [\autoref{cor:FPT-mklc}]\\
      - & 3 & 3 & 3 & 1 & NP-h [\autoref{thm:33-NPhard}]\\
      2 & - & 2 & - & 2 & NP-h [\autoref{thm:2l-NPhard}]\\
      \bottomrule
    \end{tabular}
\end{table}

\section{Formal Definitions and Preliminaries}\label{sec:prelim}

We use standard terminology for matrices.
A matrix~$\matrixI=(a_{ij})\in \R^{m\times n}$
consists of~$m$~rows and $n$~columns where $a_{ij}$ denotes the
entry in row~$i$ and~column~$j$.
We define~$[n]:=\{1,2,\ldots,n\}$ and $[i,j]:=\{i,i+1,\ldots,j\}$ for
$n,i,j\in\N$.
For simplicity, we neglect running times of arithmetical
operations throughout this paper. Since we can assume that the input values of~$\matrixI$ are upper-bounded polynomially in the size~$mn$ of~$\matrixI$
(\autoref{obs:integer-values}), the blow-up in the running times
is at most polynomial.

\paragraph{Problem Definition.}
We follow the terminology of \citet{ADK12}.
For a matrix~$\matrixI\in\R^{m \times n}$, a~$(k,\ell)$-\emph{co-clustering}
is a pair~$(\I,\J)$ consisting of
a~\emph{$k$-partition}~$\I=\{I_1,\ldots,I_k\}$ of the row indices~$[m]$ of~$\matrixI$ (that is,
$I_i\subseteq [m]$ for all~$1\le i \le k$, $I_i\cap I_j=\emptyset$ for all~$1\le i < j \le k$, and~$\bigcup_{i=1}^kI_i=[m]$)
and an~$\ell$\dash{}partition~$\J=\{J_1,\ldots,J_\ell\}$ of the
column indices~$[n]$ of~$\matrixI$. We call the elements of $\I$ (resp., $\J$) row blocks (column blocks, resp.).
Additionally, we require $\I$ and~$\J$ to not contain empty sets.
For~$(r,s) \in [k]\times[\ell]$, the
set~$\matrixI_{rs}:=\{a_{ij}\in \matrixI \mid (i,j)\in I_r\times J_s\}$ is called a
\emph{\clName{}}.

The cost of a co-clustering (under maximum norm, which is the only norm we consider here)
is defined as the maximum difference between any two entries in any \clName{}, formally
$\cost(\I,\J):=\max_{(r,s)\in [k]\times[\ell]}(\max \matrixI_{rs}-\min \matrixI_{rs})$. Herein, $\max \matrixI_{rs}$ ($\min \matrixI_{rs}$) denotes the maximum (minimum, resp.) entry in~$\matrixI_{rs}$.

\begin{figure}[t]
\centering
\begin{tikzpicture}
 \def\identitymatrix{{{0,4,3,0},{2,2,1,3},{1,3,4,1}}};
 \matrixDraw[\identitymatrix]{2}{3}
 \node [left=0.1cm of a10] {$\matrixI=$};

 \begin{scope}[xshift=5cm]
    \def\identitymatrix{{{0,3,0,4},{2,1,3,2},{1,4,1,3}}};
    \matrixDraw[\identitymatrix]{2}{3}
    \node at (0.5,1.5) {$J_1$};
    \node at (3*0.5,1.5) {$J_2$};
    \node at (-0.6,2*0.5) {$I_1$};
    \node at (-0.6,0.5*0.5) {$I_2$};
    \draw (0.25+2*0.5,-0.25)--(0.25+2*0.5,1.25);
    \draw (-0.25,0.25+0.5)--(1.75,0.25+0.5);
    \node at (0.75,-1.5*0.9) {{$J_1=\{1,3,4\}$, $J_2=\{2\}$}};
    \node at (0.75,-1*0.9) {{$I_1=\{1\}$, $I_2=\{2,3\}$}};
 \end{scope}
 
 \begin{scope}[xshift=10cm]
  \def\identitymatrix{{{0,0,4,3},{1,1,3,4},{2,3,2,1}}};
  \matrixDraw[\identitymatrix]{2}{3}
  \node at (0.25,1.5) {$J_1$};
  \node at (0.25+2*0.5,1.5) {$J_2$};
  \node at (-0.6,2*0.5) {$I_1$};
  \node at (-0.6,0.5*0.5) {$I_2$};
  \draw (0.25+0.5,-0.25)--(0.25+0.5,1.25);
  \draw (-0.25,0.25+0.5)--(1.75,0.25+0.5);
  \node at (0.75,-1.5*0.9) {{$J_1=\{1,4\}$, $J_2=\{2,3\}$}};
  \node at (0.75,-1*0.9) {{$I_1=\{2\}$, $I_2=\{1,3\}$}};
 \end{scope}
\end{tikzpicture}
\caption{The example shows two~$(2,2)$-co-clusterings (middle and right) of the same matrix~$\matrixI$ (left-hand side).
It demonstrates that by sorting rows and columns according to the co-clustering, the \clName{}s can be illustrated as submatrices of this (permuted) input matrix.
The cost of the \CoClu{2,2} in the middle is three (because of the two left \clName{}s) and that of the \CoClu{2,2} on the right-hand side is one.}\label{fig:intro-example}
\end{figure}

The decision variant of \LpCoCluProb with maximum norm is as follows.

\problemdef{\LpCoCluProb[\infty]}%
{A matrix~$\matrixI \in \R^{m \times n}$, integers~$k,\ell \in \N$, and a cost~$c \ge 0$.}%
{Is there a~$(k,\ell)$-co-clustering~$(\I,\J)$ of~$\matrixI$ with~$\cost(\I,\J)\le c$?}

\noindent See \autoref{fig:intro-example} for an introductory example.
We define $\Sigma:=\{a_{ij}\in \matrixI\mid (i,j)\in [m] \times[n]\}$ to be the \emph{alphabet} of the input matrix~$\matrixI$ (consisting of the numerical values that occur in~$\matrixI$).
Note that~$|\Sigma|\le mn$.
We use the abbreviation~\klpCoCluProb{k}{\ell}{\infty}
to refer to~\LpCoCluProb[\infty] with constants $k,\ell \in\N$, and
by~\klpCoCluProb{k}{*}{\infty} we refer to the case where only~$k$ is
constant and~$\ell$ is part of the input.
Clearly, \LpCoCluProb[\infty] is symmetric with respect to~$k$
and~$\ell$ in the sense that any $(k,\ell)$-co-clustering of a
matrix~$\matrixI$ is equivalent to an $(\ell,k)$-co-clustering of the
transposed matrix~$\matrixI^T$.
Hence, we always assume that~\mbox{~$k\leq\ell$}.

We next collect some simple observations.
First, determining whether there is a cost-zero (perfect)
co-clustering is easy. Moreover, since, for a binary
alphabet, the only interesting case is a perfect co-clustering,
we get the following.
\begin{Observation}
  \label{thm:c=0-poly}\label{thm:sigma=2-poly}
  \LpCoCluProb[\infty] is solvable in $O(mn)$ time for cost zero
  and also for any size-two alphabet.
\end{Observation}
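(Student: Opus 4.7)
The plan is to handle the two claims in sequence, reducing the size-two alphabet case to the cost-zero case.

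For the cost-zero part, I would first prove the following structural characterization: a cost-zero $(k,\ell)$-co-clustering exists if and only if the rows of $\matrixI$ fall into at most $k$ equivalence classes under equality, and similarly the columns fall into at most $\ell$ equivalence classes. The ``only if'' direction is immediate: if rows $i$ and $i'$ lie in the same row block~$I_r$, then for every column block~$J_s$ the cluster~$\matrixI_{rs}$ contains both $a_{ij}$ and $a_{i'j}$ for each $j \in J_s$, so cost zero forces $a_{ij}=a_{i'j}$ for all columns $j$, meaning the two rows are identical (symmetrically for columns). For ``if,'' group identical rows together to form at most $k$ initial row blocks, then split arbitrarily until exactly $k$ blocks remain (possible whenever $k \le m$); do the same for columns. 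Each resulting cluster is a submatrix whose rows are all identical within the original row-equivalence class and whose columns are identical within the column-equivalence class, hence the cluster is constant. Algorithmically, the equivalence classes of rows and of columns can be computed in $O(mn)$ time by radix-sorting the rows and columns (using \autoref{obs:integer-values} mentioned earlier so that all entries can be treated as polynomially bounded integers), after which the feasibility check $k^*\le k\le m$ and $\ell^*\le\ell\le n$ is immediate.

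For the size-two alphabet part, let $\Sigma=\{a,b\}$ with $a<b$. Any cluster contains only entries from $\{a,b\}$, so its max-minus-min value is either $0$ (monochromatic cluster) or exactly $b-a$. Hence the only two interesting regimes are $c\ge b-a$ and $c<b-a$. In the former, every $(k,\ell)$-co-clustering automatically has cost at most $c$, so the answer is ``yes'' iff $k\le m$ and $\ell\le n$. In the latter, the required cost $c$ forces every cluster to be monochromatic, which is precisely the cost-zero case and is decided in $O(mn)$ time by the algorithm above.

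I do not anticipate a genuine obstacle here; the statement is essentially a packaging of two simple observations. The only point demanding any care is achieving linear-time $O(mn)$ rather than the naive $O(mn\log(m+n))$ for grouping identical rows and columns, which is why I would explicitly invoke radix sort on the polynomially bounded integer entries.
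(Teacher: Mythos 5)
Your proposal is correct and follows essentially the same route as the paper: characterize cost-zero feasibility by the number of distinct rows and columns (checked via radix/lexicographic sorting in $O(mn)$ time), and reduce the binary-alphabet case to the cost-zero case since any non-monochromatic cluster already costs $b-a$. You merely spell out both directions of the characterization and the case distinction on $c$ versus $b-a$, which the paper leaves implicit.
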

\begin{proof}
  Let $(\matrixI,k,\ell,0)$ be a \LpCoCluProb[\infty] input instance.
  For a $(k,\ell)$-co-clustering with cost 0, it holds
  that all entries of a \clName{} are equal.
  This is only possible if there are at most~$k$
  different rows and at most~$\ell$ different columns in~$\matrixI$ since
  otherwise there will be a \clName{} containing two different entries.
  Thus, the case~$c=0$ can be solved by lexicographically sorting
  the rows and columns of~$\matrixI$ in~$O(mn)$ time (e.g.\ using radix sort).
\end{proof}

We further observe that the input matrix can, without loss of
generality, be assumed to contain only integer values
(by some rescaling arguments preserving the distance relations between elements).
\begin{Observation}\label{obs:integer-values}
  For any \LpCoCluProb[\infty]-instance with arbitrary
  alphabet~$\Sigma\subset \R$, one can find in~$O(|\Sigma|^2)$ time an
  equivalent instance with alphabet~$\Sigma'\subset \Z$ and cost
  value~$c'\in \N$.
\end{Observation}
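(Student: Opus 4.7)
The plan is to build an integer-valued function $f \colon \Sigma \to \Z$ and a natural-valued threshold $c' \in \N$ that jointly preserve the ``close'' relation $\sigma \sim \sigma' :\Leftrightarrow |\sigma - \sigma'| \le c$ on $\Sigma$, i.e., $|f(\sigma) - f(\sigma')| \le c'$ iff $\sigma \sim \sigma'$. Since the per-cluster cost $\max \matrixI_{rs} - \min \matrixI_{rs}$ is at most~$c$ exactly when every pair of entries in $\matrixI_{rs}$ is $\sim$-related, replacing each entry of~$\matrixI$ by its $f$-image and $c$ by $c'$ yields an instance with identically the same set of feasible $(k,\ell)$-co-clusterings.

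First I would sort $\Sigma$ as $\sigma_1 < \sigma_2 < \ldots < \sigma_s$ (with $s := |\Sigma|$) and, by a two-pointer sweep, compute for every $i \in [s]$ the window endpoints $\ell_i := \min\{j : \sigma_i - \sigma_j \le c\}$ and $r_i := \max\{j : \sigma_j - \sigma_i \le c\}$, both non-decreasing in~$i$; together with the sort this fits in $O(|\Sigma|^2)$ time. The task then reduces to choosing strictly increasing integers $f(\sigma_1) < \ldots < f(\sigma_s)$ and an integer $c'$ such that, for all $1 \le i \le j \le s$, $f(\sigma_j) - f(\sigma_i) \le c'$ iff $j \le r_i$. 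Viewed as a difference-constraint system with integer right-hand sides, this is feasible over $\Z$ whenever it is feasible over $\R$, and the original instance witnesses real feasibility.

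The main obstacle is producing such integers with magnitude polynomial in~$s$, so the claimed $O(|\Sigma|^2)$ bound can hold. I would set $c' := s$ and define $f$ greedily in increasing order of~$i$: take $f(\sigma_1) := 0$ and, for $i \ge 2$, set $f(\sigma_i)$ to be the smallest integer strictly greater than $f(\sigma_{i-1})$ that also exceeds $f(\sigma_{\ell_i - 1}) + c'$ whenever $\ell_i > 1$ (a ``jump'' forced by the window's left edge advancing). An induction on $i$ then has to maintain the upper bound $f(\sigma_i) - f(\sigma_{\ell_i}) \le c'$. The delicate case is a non-jump step immediately after a jump: a unit increment can otherwise push the difference to $c' + 1$ relative to the leftmost in-window element. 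I would handle this by endowing each jump with a small additive buffer (at most the current window width, hence at most $s$) so that the subsequent unit increments still respect the bound; the out-of-window lower bounds hold automatically since $f$ is strictly increasing and jumps by at least $c'+1$ across every window boundary. This keeps all $f$-values in $O(s^2)$, and outputting $\Sigma' := f(\Sigma)$ together with $c'$ completes the reduction within the stated running time.
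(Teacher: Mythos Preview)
Your overall plan---reduce to preserving the binary ``close'' relation $|\sigma-\sigma'|\le c$ by an order-preserving integer relabelling---is exactly what the paper does. The paper phrases it as finding an integer unit-interval representation of the graph $G_c=(\Sigma,\{\{\sigma,\sigma'\}:|\sigma-\sigma'|\le c\})$, and simply appeals to the fact that such a representation with endpoints in $\{0,\dots,|\Sigma|^2\}$ and common integer length $c'\le|\Sigma|$ exists; it does not give an explicit construction either.

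Your explicit greedy, however, does not work as stated, and the proposed ``buffer'' patch cannot rescue it. Take $\Sigma=\{0,1,3,3.1\}$ with $c=2.5$; then $\ell_1=\ell_2=1$ and $\ell_3=\ell_4=2$. For \emph{any} choice of~$c'$, your rule produces $f(\sigma_1)=0$, $f(\sigma_2)=1$, $f(\sigma_3)=c'+1$, $f(\sigma_4)=c'+2$, so $f(\sigma_4)-f(\sigma_2)=c'+1>c'$ although $|3.1-1|\le c$. No buffer at the jump step~$3$ helps: the far constraint already forces $f(\sigma_3)\ge f(\sigma_1)+c'+1$, and together with the unit increment at step~$4$ this pins $f(\sigma_4)-f(\sigma_2)\ge c'+1$ whenever $f(\sigma_2)-f(\sigma_1)=1$. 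The real defect occurred at step~$2$, \emph{before} any jump---a feasible assignment must take $f(\sigma_2)\ge 2$ (e.g.\ $f=(0,2,5,6)$ with $c'=4$ works)---and a left-to-right greedy that always picks the minimum admissible value cannot anticipate this.

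A clean repair: for any fixed integer~$c'$, your conditions are a difference-constraint system with integer right-hand sides (edges $a_{r_i}-a_i\le c'$, $a_i-a_{r_i+1}\le -c'-1$, $a_{i-1}-a_i\le-1$), so integer feasibility coincides with real feasibility and is decidable by one Bellman--Ford pass in $O(|\Sigma|^2)$ time. It then remains to argue that some $c'\le|\Sigma|$ is feasible; this is precisely the unit-interval-graph fact the paper invokes.
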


\begin{proof}
  We show that for any instance with arbitrary alphabet~$\Sigma\subset\R$ and
  cost~$c\geq0$, there exists an equivalent instance with
  $\Sigma'\subset \Z$ and $c'\in \N$.
  Let~$\sigma_i$ be the $i$-th element of~$\Sigma$ with respect to any fixed ordering.
  The idea is that the cost value~$c$ determines which elements of~$\Sigma$ are allowed to
  appear together in a cluster of a cost-$c$ co-clustering.
  Namely, in any cost-$c$ co-clustering two elements~$\sigma_i\neq \sigma_j$ can occur in the same \clName if and only if~$|\sigma_i-\sigma_j|\le c$.
  These constraints can be encoded in an undirected graph
  $G_c:=(\Sigma,E)$ with~$E:=\{\{\sigma_i,\sigma_j\}\mid \sigma_i\neq \sigma_j\in\Sigma,|\sigma_i-\sigma_j|\leq c\}$, where each vertex corresponds to an element of~$\Sigma$, and there is an edge between two vertices if and only if the corresponding elements can occur in the same cluster of a cost-$c$ co-clustering.
  
  Now, observe that~$G_c$ is a \emph{unit interval graph} since each vertex~$\sigma_i$ can be represented by the length-$c$ interval~$[\sigma_i,\sigma_i+c]$ such that it holds
  $\{\sigma_i,\sigma_j\}\in E\Leftrightarrow [\sigma_i,\sigma_i+c]\cap
  [\sigma_j,\sigma_j+c]\neq\emptyset$ (we assume all intervals to contain real values).
  By properly shifting and rescaling the intervals, one can find an embedding of~$G_c$ where the vertices~$\sigma_i$
  are represented by length-$c'$ intervals~$[\sigma'_i,\sigma'_i+c']$ of equal integer length~$c'\in\N$ with integer starting points~$\sigma'_i\in\Z$ such that $0\leq \sigma'_i \leq |\Sigma|^2$, $c'\le |\Sigma|$,
  and $|\sigma'_i-\sigma'_j|\leq c'\Leftrightarrow |\sigma_i-\sigma_j|\leq c$.
  Hence, replacing the elements~$\sigma_i$ by~$\sigma_i'$ in the input matrix yields a matrix that has a cost-$c'$ co-clustering if and only if the original input matrix has a cost-$c$ co-clustering.
  Thus, for any instance with alphabet~$\Sigma$ and cost~$c$,
  there is an equivalent instance with alphabet~$\Sigma'\subseteq \{0,\ldots,|\Sigma|^2\} $
  and cost~$c'\in \{0,\ldots,|\Sigma|\}$.
  Consequently,  we can upper-bound the values in~$\Sigma'$
  by~$|\Sigma|^2\le (mn)^2$.
\end{proof}

Due to \autoref{obs:integer-values}, we henceforth assume for the rest of the paper that the input matrix contains integers.

\paragraph{Parameterized Algorithmics.}
We briefly introduce the relevant notions from parameterized
algorithmics (refer to the monographs~\cite{Cyg15, DF13,Nie06} for a detailed introduction).
A \emph{parameterized problem}, where each instance consists of the ``classical''
problem instance~$I$ and an integer~$\rho$ called \emph{parameter}, is \emph{fixed-parameter
  tractable}~(FPT) if there is a computable function~$f$ and an
algorithm solving any instance in~$f(\rho)\cdot |I|^{O(1)}$ time.
The corresponding algorithm is called an FPT-algorithm.

\section{Intractability Results}
\label{sec:intractability}

In the previous section, we observed that \LpCoCluProb[\infty]
is easy to solve for binary input matrices (\autoref{thm:sigma=2-poly}).
In contrast to this, we show in this section
that its computational complexity significantly changes as soon as the input matrix
contains at least three different entries.
In fact, even for very restricted special cases we can show
NP-hardness. These special cases comprise co-clusterings
with a constant number of \clName{}s (\autoref{sec:const-blocks}) or input matrices with only two rows (\autoref{sec:const-rows}). 
We also show NP-hardness of finding co-clusterings where the
row and column partitions are only allowed to contain consecutive blocks~(\autoref{sec:consec-blocks}).

\subsection{Constant Number of Clusters}
\label{sec:const-blocks}

We start by showing that for input matrices containing three different
entries, \LpCoCluProb[\infty] is NP-hard even if the co-clustering
consists only of nine~\clName{}s.

\begin{Theorem}
 \label{thm:33-NPhard}
 \klpCoCluProb{3}{3}{\infty} is NP-hard for~$\Sigma =\{0,1,2\}$.
\end{Theorem}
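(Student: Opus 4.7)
The plan is to reduce from \textsc{Graph 3-Coloring}. The starting observation is that a $(3,3)$-co-clustering of a $\{0,1,2\}$-matrix $\matrixI$ has cost at most $1$ if and only if no cluster simultaneously contains a $0$ and a $2$; equivalently, each of the nine clusters can be labeled ``low'' (entries in $\{0,1\}$) or ``high'' (entries in $\{1,2\}$), with every $0$-entry of $\matrixI$ lying in a low cluster and every $2$-entry in a high cluster. The reduction must therefore translate a graph $G=(V,E)$ into a matrix whose $0$-entries encode ``same-color'' constraints and whose $2$-entries encode ``different-color'' constraints, in such a way that the only feasible low/high labelings of the $3\times 3$ cluster grid force the row and column partitions to correspond to a proper $3$-coloring of $G$.

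The construction I would use indexes both rows and columns by $V$, with default entry $1$, together with a small anchor gadget. The essential entries are: first, $a_{vv}=0$ for every $v\in V$, which forces the cluster containing the diagonal cell at $v$ to be low and will eventually tie the row block of $v$ to the column block of $v$; second, $a_{uv}=2$ whenever $\{u,v\}\in E$, so that the cluster containing $(u,v)$ must be high, pushing the endpoints into distinct blocks; and third, a constant-size anchor gadget (a handful of extra rows and columns with prescribed $0$- and $2$-entries) whose job is to force the low/high labeling of the $3\times 3$ grid to be, up to relabeling, the diagonal pattern (low on the diagonal, high off the diagonal), and simultaneously to place the three anchor rows (respectively anchor columns) in three distinct row blocks (respectively column blocks). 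Once the diagonal labeling is pinned down, the diagonal $0$-entries force $r(v)=c(v)$ for all $v$, so the two partitions coincide with a single $3$-coloring $\chi$ of $V$; the edge $2$-entries then yield $\chi(u)\neq\chi(v)$ for every $\{u,v\}\in E$, i.e.\ $\chi$ is proper. Conversely, any proper $3$-coloring trivially yields a feasible co-clustering of cost $1$.

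The main obstacle will be the design and verification of the anchor gadget. A priori there are $2^{9}$ possible low/high labelings of the grid, and many of them, combined with asymmetric row/column partitions, could admit unintended ``cheating'' co-clusterings; moreover, because the row and column partitions are a priori independent, an extra mechanism is needed to make the diagonal $0$-entries actually enforce $r(v)=c(v)$ (and not merely that the cluster $(r(v),c(v))$ is some low cluster). The bulk of the proof is therefore a careful case analysis that rules out every non-diagonal labeling, together with a check that each of the three row and column blocks is nonempty, possibly via a few padding vertices whose color is fixed by the gadget. I expect this gadget design, rather than the high-level idea, to be where the real work of the proof lies.
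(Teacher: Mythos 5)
Your reduction rests on the right forcing mechanism, but as written it has a genuine gap: the anchor gadget, which you yourself identify as carrying ``the bulk of the proof,'' is never constructed, and the case analysis over the $2^9$ low/high labelings that is supposed to rule out cheating co-clusterings is never carried out. Since the correctness of the whole reduction is delegated to this unspecified gadget, the argument is incomplete at exactly the point where it has to do work.

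The deeper point is that the detour through the gadget is unnecessary, and seeing this is what makes the paper's proof short. The only forcing ever needed is local: if some row contains a $0$ in column $j$ and a $2$ in column $j'$, then in any cost-$1$ co-clustering these two columns must lie in different column blocks (they share a row block, so otherwise some cluster contains both $0$ and $2$). In your own $V\times V$ matrix this already suffices with no anchor at all: for an edge $\{u,v\}$, row $u$ contains $a_{uu}=0$ and $a_{uv}=2$, so the column partition alone is forced to be a proper $3$-coloring; conversely, taking both partitions equal to the color classes gives diagonal clusters contained in $\{0,1\}$ and off-diagonal clusters contained in $\{1,2\}$. You never need to pin down the global low/high labeling of the $3\times 3$ grid, nor to enforce $r(v)=c(v)$. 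The paper streamlines this further by using the edge--vertex incidence matrix (one row per edge, carrying a single $0$ and a single $2$ in the columns of its endpoints): the column partition is again forced to be a proper coloring, and the row partition comes for free because each row has one of only six types (determined by which column blocks receive its $0$ and its $2$), and grouping rows by the block containing the $0$ yields three row blocks in which no cluster mixes $0$ and $2$. Either way, no gadget and no labeling case analysis are required.
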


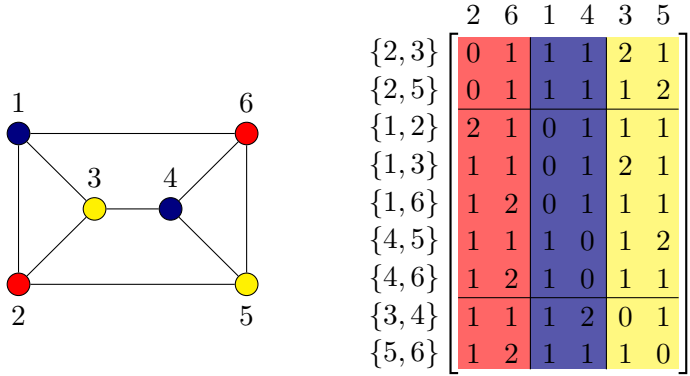
\begin{figure}[t]
  \centering
  \begin{tikzpicture}[baseline=-1.25cm]
    \tikzstyle{point}=[draw, shape=circle, inner sep=3pt];
   
    \node[point, fill=blue!50!black, label=above:{1}] (v1) at (0,2) {};
    \node[point, fill=red, label=below:{2}] (v2) at (0,0) {};
    \node[point, fill=yellow, label=above:{3}] (v3) at (1,1) {};
    \node[point, fill=blue!50!black, label=above:{4}] (v4) at (2,1) {};
    \node[point, fill=yellow, label=below:{5}] (v5) at (3,0) {};
    \node[point, fill=red, label=above:{6}] (v6) at (3,2) {};

    \draw (v1) -- (v2) -- (v3) -- (v1);
    \draw (v4) -- (v5) -- (v6) -- (v4);
    \draw (v3) -- (v4);
    \draw (v1) -- (v6);
    \draw (v2) -- (v5);
    \end{tikzpicture}
    \hspace{1cm}
    \begin{tikzpicture}
      \tikzstyle R=[fill=red, opacity=.6]
      \tikzstyle B=[fill=blue!50!black, opacity=.66]
      \tikzstyle G=[fill=yellow, opacity=.5]
      
      \fill[R] (-0.2,-0.2) rectangle (0.75,4.2);
      \fill[B] (0.75,-0.2) rectangle (1.75,4.2);
      \fill[G] (1.75,-0.2) rectangle (2.7,4.2);
     
      \def\identitymatrix{{
          {1,2,1,1,1,0},
          {1,1,1,2,0,1},
          {1,2,1,0,1,1},
          {1,1,1,0,1,2},
          {1,2,0,1,1,1},
          {1,1,0,1,2,1},
          {2,1,0,1,1,1},
          {0,1,1,1,1,2},
          {0,1,1,1,2,1}
          }};
      \matrixDraw[\identitymatrix]{8}{5}

      \node at (0,4.5) {2};
      \node at (0.5,4.5) {6};
      \node at (1.0,4.5) {1};
      \node at (1.5,4.5) {4};
      \node at (2.0,4.5) {3};
      \node at (2.5,4.5) {5};

      \node[left] at (-0.3,4.0) {$\{2,3\}$};
      \node[left] at (-0.3,3.5) {$\{2,5\}$};
      \node[left] at (-0.3,3.0) {$\{1,2\}$};
      \node[left] at (-0.3,2.5) {$\{1,3\}$};
      \node[left] at (-0.3,2.0) {$\{1,6\}$};
      \node[left] at (-0.3,1.5) {$\{4,5\}$};
      \node[left] at (-0.3,1.0) {$\{4,6\}$};
      \node[left] at (-0.3,0.5) {$\{3,4\}$};
      \node[left] at (-0.3,0.0) {$\{5,6\}$};
      
      \draw (0.25+0.5,-0.2)--(0.25+0.5,4.2);
      \draw (0.25+3*0.5,-0.2)--(0.25+3*0.5,4.2);
      \draw (-0.2,0.25+0.5)--(2.7,0.25+0.5);
      \draw (-0.2,0.25+6*0.5)--(2.7,0.25+6*0.5);
    \end{tikzpicture}
    \caption{An illustration of the reduction from~\textsc{3-Coloring}.
    Left: An undirected graph with a proper 3-coloring of the vertices such that no two neighboring vertices have the same color. Right: The corresponding matrix where the columns are labeled by vertices and the rows by edges with a $(3,3)$-co-clustering of cost~1. The coloring of the vertices determines the column partition into three columns blocks, whereas the row blocks are generated by the following simple scheme: Edges where the vertex with smaller index is red/blue (dark)/yellow (light) are in the first/second/third row block (e.g.\ the red-yellow edge~$\{2,5\}$ is in the first block, the blue-red edge~$\{1,6\}$ is in the second block, and the yellow-blue edge~$\{3,4\}$ is in the third block).}
    \label{fig:3Col}
\end{figure}

\begin{proof}
 We prove NP-hardness by reducing from the NP-complete
 \textsc{3-Coloring}~\cite{GJ79}, where the task is to partition
  the vertex set of an undirected graph into three independent sets.
  Let~$G=(V,E)$ be a \textsc{3-Coloring} instance with
  $V=\{v_1,\ldots,v_n\}$ and $E=\{e_1,\ldots,e_m\}$.
  We construct a \klpCoCluProb{3}{3}{\infty} instance $(\matrixI\in
  \{0,1,2\}^{m\times n},k:=3,\ell:=3,c:=1)$ as follows.
  The columns of~$\matrixI$ correspond to the vertices~$V$ and the rows
  correspond to the edges~$E$.
  For an edge~$e_i=\{v_j,v_{j'}\}\in E$ with $j<j'$, we set
  $a_{ij}:=0$ and $a_{ij'}:=2$. All other matrix entries are set to
  one. Hence, each row corresponding to an edge~$\{v_j,v_{j'}\}$ consists
  of 1-entries except for the columns~$j$ and~$j'$, which contain~0 and~2 (see \autoref{fig:3Col}).
  Thus, every co-clustering of~$\matrixI$ with cost at most~$c=1$ puts
  column~$j$ and column~$j'$ into different column blocks.
  We next prove that there is a $(3,3)$-co-clustering of~$\matrixI$ with
  cost at most $c=1$ if and only if $G$~admits a
  3-coloring.

  First, assume that $V_1,V_2,V_3$ is a partition of
  the vertex set~$V$ into three independent sets. We define a
  \CoClu{3,3}~$(\I,\J)$ of~$\matrixI$ as follows.
  The column partition~$\J:=\{J_1,J_2,J_3\}$ one-to-one corresponds to the
  three sets~$V_1,V_2,V_3$, that is, $J_s:=\{i\mid v_i\in V_s\}$ for all~$s\in\{1,2,3\}$.
  By the construction above, each row has exactly two non-1-entries
  being~0 and~2.
  We define the type of a row to be a permutation of~$0,1,2$, denoting which of
  the column blocks~$J_1,J_2,J_3$ contain the 0-entry and the 2-entry.
  For example, a row is of type~$(2,0,1)$ if it has a~2 in a column
  of~$J_1$ and a~0 in a column of~$J_2$.
  The row partition~$\I:=\{I_1,I_2,I_3\}$ is defined as follows:
  All rows of type~$(0,2,1)$ or~$(0,1,2)$ are put into~$I_1$.
  Rows of type~$(2,0,1)$ or~$(1,0,2)$ are contained in~$I_2$
  and the remaining rows of type~$(2,1,0)$ or~$(1,2,0)$
  are contained in~$I_3$.
  Clearly, for~$(\I,\J)$, it holds that the non-1-entries in any
  cluster are either all~0 or all~2, implying that $\cost(\I,\J)\le 1$.

  Next, assume that $(\I,\{J_1,J_2,J_3\})$ is a \CoClu{3,3}
  of~$\matrixI$ with cost at most~$1$. The vertex sets $V_1,V_2,V_3$,
  where~$V_s$ contains the vertices corresponding to the columns
  in~$J_s$, form three independent sets:
  If an edge connects two vertices in~$V_s$, then the corresponding row
  would have the 0-entry and the 2-entry in the same column
  block~$J_s$, yielding a cost of~2, which is a contradiction.
\end{proof}
\autoref{thm:33-NPhard} can even be strengthened further.
\begin{Corollary}
   \LpCoCluProb[\infty] with~$\Sigma=\{0,1,2\}$ is NP-hard for any $k
   \ge 3$, even when $\ell \geq 3$ is fixed, and
   the column blocks are forced to have equal sizes $|J_1|=\ldots =|J_\ell|$.
\end{Corollary}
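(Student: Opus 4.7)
The plan is to extend the reduction in \autoref{thm:33-NPhard} by starting from $q$-\textsc{Coloring} (NP-hard for every fixed $q\ge 3$~\cite{GJ79}) and padding the constructed matrix with all-$1$ ``dummy'' rows and columns. Two cases arise depending on whether $k \ge \ell$ or $k < \ell$; both produce $\{0,1,2\}$-matrices and use cost bound $c=1$.

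\textbf{Case $k \ge \ell \ge 3$.} I reduce from $\ell$-\textsc{Coloring}. Given $G = (V, E)$ with $|V| = n$ and $|E| = m$, I build an $(m+k) \times \ell n$ matrix whose top-left $m \times n$ block is exactly the matrix from \autoref{thm:33-NPhard} and whose remaining $(\ell-1)n$ columns and $k$ rows are all $1$s. For the forward direction, given an $\ell$-coloring with class sizes $n_1, \dots, n_\ell$, I form column block $J_s$ from color class $s$ together with $n - n_s$ dummy columns, so that $|J_s| = n$ for all $s$; since $\sum_s(n - n_s) = (\ell-1)n$, exactly the required number of dummy columns is used. I then group the edge rows by the color of the smaller-indexed endpoint (equivalently, by the column block containing the $0$-entry). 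Within each such group the $0$s all sit in a single column block while the $2$s lie in other column blocks, so placing each group together with one dummy row into a row block (and each of the remaining $k - \ell$ dummy rows into its own singleton row block) produces a cost-$1$ $(k,\ell)$-co-clustering. For the backward direction, two adjacent vertices in a cost-$1$ co-clustering cannot share a column block (else their edge row would contribute both a $0$ and a $2$ to a single cluster), so the vertex columns form a valid $\ell$-coloring.

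\textbf{Case $3 \le k < \ell$.} I use the transposed construction, reducing from $k$-\textsc{Coloring}: rows correspond to vertices, columns to edges, and each edge column carries a $0$ and a $2$ in its endpoint rows. I add $(\ell - 1)m$ all-$1$ dummy columns so that all $\ell$ column blocks can have common size $m$, and $k$ all-$1$ dummy rows so that all $k$ row blocks are nonempty even if the coloring uses fewer than $k$ colors. Given a $k$-coloring, I place color class~$r$ together with one dummy row into row block~$r$, partition the edge columns into $k$ signature blocks (by the row block of the smaller endpoint) together with $\ell - k$ dummy-only blocks, and pad each column block with dummy columns up to size $m$. The cost verification and the backward direction mirror Case~1, now extracting a $k$-coloring from the row partition of the vertex rows.

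The main obstacle I anticipate is the bookkeeping that simultaneously guarantees equal column sizes and nonempty row blocks for every valid coloring---even when the coloring is very unbalanced or uses strictly fewer than the maximum allowed number of colors. Pre-allocating one dummy row per row block and enough dummy columns to pad any color class up to the common block size should handle both requirements cleanly.
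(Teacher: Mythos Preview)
Your proof plan is correct; both cases go through and the dummy padding bookkeeping works out exactly as you describe. However, your route is considerably more elaborate than the paper's. The paper's entire argument is two observations: (i)~the backward direction of the reduction in \autoref{thm:33-NPhard} uses only the column partition, so replacing $k=3$ by any $k\ge 3$ changes nothing (the row partition in the forward direction can simply be refined); and (ii)~\textsc{$\ell$-Coloring} with \emph{balanced} color classes is already NP-hard for every fixed $\ell\ge 3$~\cite{GJ79}, so reducing from that variant makes the column blocks equal-sized automatically, without any padding. Your approach trades the appeal to balanced-coloring hardness for explicit all-$1$ dummy rows and columns, and handles the $k<\ell$ case by transposing the construction. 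What this buys you is a fully self-contained argument that does not rely on knowing the balanced version is hard and that treats the $k<\ell$ regime head-on; what the paper's approach buys is brevity---no padding, no case split, no counting of dummies.
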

\begin{proof}
Note that the reduction in \autoref{thm:33-NPhard}
clearly holds for any~$k \ge 3$.
Also, \textsc{$\ell$\dash{}Coloring} with balanced partition
sizes is still NP-hard for~$\ell \ge 3$~\cite{GJ79}.
\end{proof}

\subsection{Constant Number of Rows}
\label{sec:const-rows}

The reduction in the proof of \autoref{thm:33-NPhard} outputs matrices
with an unbounded number of rows and columns containing only
three different values.
We now show that also the ``dual restriction'' is NP-hard, that is,
the input matrix only has a constant number of rows (two) but contains
an unbounded number of different values.
Interestingly, this special case is closely related to a
two-dimensional variant of geometric set covering.

\begin{Theorem}
  \label{thm:2l-NPhard}
  \LpCoCluProb[\infty] is NP-hard for~$k=m=2$ and unbounded alphabet size~$|\Sigma|$.
\end{Theorem}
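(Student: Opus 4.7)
The plan exploits that for $m=k=2$ the row partition is forced to be $\{\{1\},\{2\}\}$, so a $(2,\ell)$-co-clustering is determined entirely by its column partition. Viewing the $j$-th column as a point $p_j := (a_{1j},a_{2j}) \in \mathbb{Z}^2$, the cost of such a co-clustering is the largest $L_\infty$-diameter of any block $\{p_j : j\in J_s\}$. Since a finite subset of $\mathbb{R}^2$ has $L_\infty$-diameter at most~$2$ if and only if it lies inside some axis-aligned square of side~$2$, asking for a $(2,\ell)$-co-clustering of cost at most~$2$ is exactly asking whether $n$ given integer points of the plane can be partitioned into $\ell$ non-empty groups each fitting in an axis-aligned side-$2$ square.

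I would reduce from the classical NP-hard problem of covering $n$ integer planar points by $k$ axis-aligned unit squares (Fowler, Paterson and Tanimoto, 1981). Given such a \textsc{Box Cover} instance, multiply all coordinates by~$2$, place the scaled points as the columns of a $2\times n$ integer matrix $\matrixI$, and ask for a $(2,\ell)$-co-clustering of $\matrixI$ with $\ell:=k$ and cost $c:=2$. A cover of the original points by unit squares becomes, after scaling, a cover of the columns by side-$2$ squares; assigning each point to one covering square produces the desired column partition whose blocks have $L_\infty$-diameter at most~$2$. Conversely, a cost-$2$ co-clustering yields $\ell$ column groups each of $L_\infty$-diameter at most~$2$, and the smallest enclosing axis-aligned square of each group provides the required cover.

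The main obstacle is dealing cleanly with two small mismatches between the source and target problems. First, our definition forbids empty blocks while \textsc{Box Cover} allows using fewer than $k$ squares; I would handle this by assuming $n\ge k$ (otherwise the instance is trivial) and, if some feasible cover uses $k'<k$ squares, splitting any block into smaller pieces, which never increases the $L_\infty$-diameter. A cleaner alternative is to append $k$ ``sentinel'' columns at pairwise $L_\infty$-distance strictly larger than~$2$ and far from all original points, while raising $\ell$ to~$2k$: every sentinel is forced into its own block, and the remaining $k$ blocks must cover the original points. Second, one must observe that \textsc{Box Cover} remains NP-hard with integer coordinates and a fixed integer side length, which follows by a uniform rescaling of the original Fowler--Paterson--Tanimoto reduction. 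Once these standard details are in place, the reduction is essentially a one-line translation, reflecting the paper's own comment that this case is ``closely related to a two-dimensional variant of geometric set covering''.
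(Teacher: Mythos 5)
Your proposal is correct and follows essentially the same route as the paper: a reduction from the Fowler--Paterson--Tanimoto \textsc{Box Cover} problem in which the columns of a $2\times n$ matrix encode planar points and a cost-$2$ column block corresponds exactly to a set of points fitting in an axis-aligned square of side~$2$. Your extra care about empty blocks and integer rescaling only tidies up details the paper's proof leaves implicit.
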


\begin{figure}[t]
  \centering
  \begin{tikzpicture}[scale=.5]
    \tikzstyle{point}=[draw, shape=circle, inner sep=1.5pt];
    \tikzstyle R=[fill=red, opacity=.6]
    \tikzstyle B=[fill=blue!50!black, opacity=.66]
    \tikzstyle G=[fill=yellow, opacity=.5]
      \draw[lightgray] (0,0) grid (5,5);
      \foreach \x in {0,1,2,3,4,5}{
        \draw (\x,0) -- (\x,-0.1);
        \node[font=\small, below] at (\x,-0.1) {\x};
        \draw (0,\x) -- (-0.1,\x);
        \node[font=\small, left] at (-0.1,\x) {\x};
      }
      \draw[R] (0,1) rectangle (2,3);
      \draw[B] (3,3) rectangle (5,5);
      \draw[G] (3,0) rectangle (5,2);
      \node[point] at (1,1) {};
      \node[point] at (0,3) {};
      \node[point] at (2,2) {};
      \node[point] at (3,4) {};
      \node[point] at (4,0) {};
      \node[point] at (5,5) {};
      \node[point] at (5,2) {};
    \end{tikzpicture}
    \hspace{1cm}
    \begin{tikzpicture}[baseline=-1.5cm]
      \tikzstyle R=[fill=red, opacity=.6]
      \tikzstyle B=[fill=blue!50!black, opacity=.66]
      \tikzstyle G=[fill=yellow, opacity=.5]
      
      \fill[R] (-0.2,-0.2) rectangle (1.25,0.7);
      \fill[B] (1.25,-0.2) rectangle (2.25,0.7);
      \fill[G] (2.25,-0.2) rectangle (3.2,0.7);
     
      \def\identitymatrix{{{3,1,2,4,5,0,2},{0,1,2,3,5,4,5}}};
      \matrixDraw[\identitymatrix]{1}{6}
      
      \draw (0.25+2*0.5,-0.2)--(0.25+2*0.5,0.7);
      \draw (0.25+4*0.5,-0.2)--(0.25+4*0.5,0.7);
      \draw (-0.2,0.25)--(3.2,0.25);
    \end{tikzpicture}
    \caption{Example of a \textsc{Box Cover} instance with seven points (left) and the corresponding \LpCoCluProb[\infty] matrix containing the coordinates of the points as columns (right). Indicated is a $(2,3)$-co-clustering of cost~2 where the column blocks are colored according to the three squares (of side length~2) that cover all points.}
    \label{fig:BoxCover}
\end{figure}
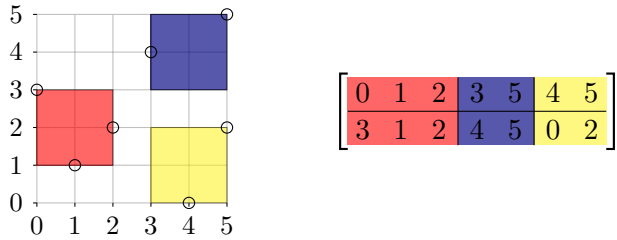

\begin{proof}
  We give a polynomial-time reduction from the NP-complete \textsc{Box
  Cover} problem~\cite{FPT81}.
  Given a set~$P\subseteq \Z^2$ of~$n$ points in the plane
  and~$\ell\in\N$, \textsc{Box Cover} is the problem to decide whether
  there are~$\ell$ squares~$S_1,\ldots,S_\ell$, each with side length~2,
    covering~$P$, that is, $P\subseteq \bigcup_{1\le s\le
    \ell} S_s$.
%

  Let~$I=(P,\ell)$ be a \textsc{Box Cover} instance. We define the
  instance~$I':=(\matrixI,k,\ell',c)$ as follows: The
  matrix~$\matrixI\in\Z^{2\times n}$ has the points~$p_1,\ldots,p_n$ in~$P$ as columns.
  Further, we set~$k:=2$, $\ell':=\ell$, $c:=2$.
  See \autoref{fig:BoxCover} for a small example.

  The correctness can be seen as follows:
  Assume that~$I$ is a yes-instance, that is, there are~$\ell$~squares~$S_1,\ldots,S_\ell$
  covering all points in~$P$. We define~$J_1:=\{j \mid p_j \in P\cap
  S_1\}$ and $J_s:=\{j\mid p_j\in P\cap S_s \setminus
  (\bigcup_{1\le l < s} S_l)\}$ for all~$2 \le s \le \ell$.
  Note that~$(\I:=\{\{1\},\{2\}\},\J:=\{J_1,\ldots,J_\ell\})$ is a~$(2,\ell)$-co-clustering
  of~$\matrixI$. Moreover, since all points with indices in~$J_s$ lie inside a square with
  side length~$2$, it holds that each pair of entries in~$\matrixI_{1s}$ as well as
  in~$\matrixI_{2s}$ has distance at most~$2$, implying~$\cost(\I,\J)\le 2$.

  Conversely, if~$I'$ is a yes-instance, then let~$(\{\{1\},\{2\}\},\J)$
  be the~$(2,\ell)$-co-clustering of cost at most~$2$.
  For any~$J_s \in \J$, it holds that all points corresponding to
  the columns in~$J_s$ have pairwise distance at most~$2$ in both
  coordinates.
  Thus, there exists a square of side length 2 covering all of them.
\end{proof}

\subsection{Clustering into Consecutive Clusters}
\label{sec:consec-blocks}

One is tempted to assume that the hardness of the previous special cases of
\LpCoCluProb[\infty] is rooted in the fact that we are allowed
to choose arbitrary subsets for the corresponding row and column
partitions since the problem remains hard even for a constant number of
\clName{}s and also with equal \clName{} sizes.
Hence, in this section, we consider a restricted version of~\LpCoCluProb[\infty],
where the row and the column partition has to consist of consecutive blocks.
Formally, for row indices $R =\{r_1,\ldots,r_{k-1}\}$ with $1 < r_1 < \ldots <
r_{k-1} \le m$ and column indices~$C =\{c_1,\ldots,c_{\ell-1}\}$
with $1 < c_1 < \ldots < c_{\ell-1} \le n$, the corresponding \emph{consecutive}
$(k,\ell)$-co-clustering~$(\I_R,\J_C)$~is~defined~as
\begin{align*}
  \I_R &:=\{\{1,\ldots,r_1-1\},\{r_1,\ldots,r_2-1\},\ldots,\{r_{k-1},\ldots,m\}\},\\
  \J_C &:=\{\{1,\ldots,c_1-1\},\{c_1,\ldots,c_2-1\},\ldots,\{c_{\ell -1},\ldots,n\}\}.
\end{align*}
The \textsc{Consecutive} \LpCoCluProb[\infty] problem now is to find a
consecutive $(k,\ell)$-co-clustering of a given input
matrix with a given cost.
Again, also this restriction is not sufficient
to overcome the inherent intractability of co-clustering, that is,
we prove it to be NP-hard.
Similarly to \autoref{sec:const-rows}, we encounter a close
relation of consecutive co-clustering to a geometric
problem, namely to find an optimal discretization of the plane; a preprocessing problem with applications in data mining~\cite{NS95, CN98, Ngu06}.
The NP-hard \textsc{Optimal Discretization} problem~\cite{CN98} is the following:
Given a set~$S=B\cup W$ of points in the plane, where each point is either colored black~($B$)
or~white~($W$), and integers~$k,\ell\in\N$, decide
whether there is a consistent set of~$k$~horizontal
and~$\ell$~vertical (axis-parallel) lines.
That is, the vertical and horizontal lines partition the plane into
rectangular regions such that no region contains two points of different colors (see \autoref{fig:example-opt-dis} for~an~example).
Here, a vertical (horizontal) line is a simple number denoting its \mbox{$x$-($y$-)coordinate.}

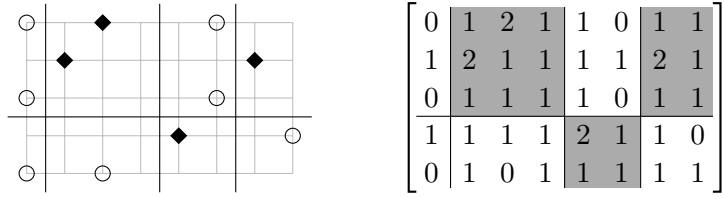
\begin{figure}[t]
\centering
\begin{tikzpicture}[scale=.5]
  \tikzstyle{pointB}=[draw, shape=diamond, fill=black, inner sep=1.5pt];
  \tikzstyle{pointR}=[draw, shape=circle, inner sep=2pt];
  \draw[lightgray] (1,1) grid (8,5);
    \foreach \x/\y in {1/1,3/1,8/2,1/3,6/3,1/5,6/5} {
      \node[pointR] at (\x ,\y) {};
    }
    \foreach \x/\y in {5/2,2/4,7/4,3/5} {
      \node[pointB] at (\x ,\y) {};
    }

     \foreach \i/\x in {1/2,2/5,3/7}{
        \draw (\x-0.5,0.5) -- (\x-0.5,5.5);
      }
       \draw (0.5,2.5) -- (8.5,2.5);
  \end{tikzpicture}
  \hspace{1cm}
  \begin{tikzpicture}
    \tikzstyle R=[fill=white]
    \tikzstyle B=[fill=black, opacity=.33]
    
    \fill[R] (-0.2,-0.2) rectangle (1.75,0.75);
    \fill[B] (1.75,-0.2) rectangle (2.75,0.75);
    \fill[R] (2.75,-0.2) rectangle (3.7,0.75);
    \fill[R] (-0.2,0.75) rectangle (0.25,2.2);
    \fill[B] (0.25,0.75) rectangle (1.75,2.2);
    \fill[R] (1.75,0.75) rectangle (2.75,2.2);
    \fill[B] (2.75,0.75) rectangle (3.7,2.2);

    \def\identitymatrix{{
        {0,1,0,1,1,1,1,1},
        {1,1,1,1,2,1,1,0},
        {0,1,1,1,1,0,1,1},
        {1,2,1,1,1,1,2,1},
        {0,1,2,1,1,0,1,1}}};
    \matrixDraw[\identitymatrix]{4}{7}
    
    \draw (0.25,-0.2) -- (0.25,2.2);
    \draw (1.75,-0.2) -- (1.75,2.2);
    \draw (2.75,-0.2) -- (2.75,2.2);
    \draw (-0.2,0.75) -- (3.7,0.75);
    \end{tikzpicture}
\caption{Example instance of \textsc{Optimal Discretization} (left)
  and the corresponding instance of \textsc{Consecutive} \LpCoCluProb[\infty] (right).
  The point set consists of white (circles) and black (diamonds) points.
A solution for the corresponding \textsc{Consecutive} \LpCoCluProb[\infty] instance
(shaded \clName{}s) naturally translates into a consistent set of lines.}\label{fig:example-opt-dis}\vspace{-5pt}
\end{figure}

\begin{Theorem}\label{thm:cons-NPhard}
  \textsc{Consecutive} \LpCoCluProb[\infty] is NP-hard for~$\Sigma =\{0,1,2\}$.
\end{Theorem}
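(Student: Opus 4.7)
The plan is to reduce from \textsc{Optimal Discretization}, exploiting the close correspondence already suggested by \autoref{fig:example-opt-dis}. Given a discretization instance $(S = B \cup W, k, \ell)$, let $y_1 < y_2 < \ldots < y_m$ be the distinct $y$-coordinates and $x_1 < x_2 < \ldots < x_n$ the distinct $x$-coordinates of the points in $S$. I would then construct the matrix $\matrixI \in \{0,1,2\}^{m \times n}$ by setting $a_{ij} := 0$ if $(x_j, y_i) \in W$, $a_{ij} := 2$ if $(x_j, y_i) \in B$, and $a_{ij} := 1$ otherwise (assuming without loss of generality that no two points of opposite colors share a location, since such a discretization instance is trivially a no-instance). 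The output of the reduction is the \textsc{Consecutive} \LpCoCluProb[\infty] instance $(\matrixI, k+1, \ell+1, 1)$.

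The correctness hinges on a simple observation about the alphabet $\{0,1,2\}$: a cluster has cost at most $1$ if and only if it does not simultaneously contain the values $0$ and $2$. By construction, this is equivalent to the corresponding axis-parallel rectangle of the plane containing no pair of a white and a black point, i.e., the rectangle is \emph{consistent} in the sense of \textsc{Optimal Discretization}. For the forward direction, given a consistent set of $k$ horizontal lines $h_1 < \ldots < h_k$ and $\ell$ vertical lines $v_1 < \ldots < v_\ell$, I would place a row split after row $i$ precisely when some $h_t$ lies in the open interval $(y_i, y_{i+1})$, and symmetrically for columns, yielding a consecutive $(k+1,\ell+1)$-co-clustering whose every cluster corresponds to a consistent rectangle and therefore has cost at most~$1$. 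For the converse, given such a consecutive co-clustering, a row split between rows~$i$ and $i+1$ is realized by placing a horizontal line anywhere inside $(y_i, y_{i+1})$, and likewise for columns; the cost bound then translates to consistency.

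The main subtlety will be reconciling the two ``counting'' conventions: \textsc{Optimal Discretization} permits, in principle, several lines to fall in the same slot between consecutive coordinates or to be placed outside the bounding box of the points, whereas \textsc{Consecutive} \LpCoCluProb[\infty] demands $k+1$ non-empty row blocks and $\ell+1$ non-empty column blocks. I would dispose of this via the standard observation that multiple lines in the same slot, as well as lines outside $[y_1,y_m] \times [x_1,x_n]$, are redundant for consistency, so that one may assume without loss of generality that the $k$ horizontal and $\ell$ vertical lines occupy $k$ and $\ell$ distinct slots between consecutive coordinates. Under this normalization, row/column splits are in exact bijection with lines, and the block-nonemptiness requirement is automatically met. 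The corner case $k \ge m$ or $\ell \ge n$ can be handled separately as a trivial instance and does not affect the NP-hardness conclusion.
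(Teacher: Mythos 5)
Your proposal is correct and follows essentially the same reduction as the paper: the identical matrix construction from \textsc{Optimal Discretization} (white $\mapsto 0$, black $\mapsto 2$, else $1$), target $(k+1,\ell+1,c=1)$, and the same correspondence between cost-$1$ clusters and consistent rectangles. You are in fact slightly more careful than the paper about the bijection between lines and block boundaries (redundant lines sharing a slot or lying outside the bounding box, and the degenerate case $k\ge m$ or $\ell\ge n$), which the paper's proof glosses over.
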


\begin{proof}
  We give a polynomial-time reduction from \textsc{Optimal Discretization}.
  Let~$(S,k,\ell)$ be an \textsc{Optimal Discretization} instance and
  let~$X := \{x_1^*,\ldots,x_n^*\}$ be the set of different
  $x$-coordinates and let~$Y :=\{y_1^*,\ldots,y_m^*\}$ be the set of
  different~$y$-coordinates of the points in~$S$. Note that~$n$
  and~$m$ can be smaller than~$|S|$ since two points can have the
  same~$x$- or $y$-coordinate. Furthermore, assume that~$x_1^*  <
  \ldots < x_n^*$ and $y_1^* < \ldots < y_m^*$.
  We now define the  \textsc{Consecutive} \LpCoCluProb[\infty] instance~$(\matrixI,k+1,\ell+1,c)$ as
  follows:
  The matrix~$\matrixI\in\{0,1,2\}^{m\times n}$ has columns labeled with
  $x_1^*,\ldots,x_n^*$ and rows labeled with $y_1^*,\ldots, y_m^*$.
  For $(x,y)\in X\times Y$, the entry~$a_{xy}$ is defined as~$0$~if $(x,y)\in W$,~2~if $(x,y)\in B$, and otherwise~1.
  The cost is set to~$c:=1$.
  Clearly, this instance can be constructed in polynomial time.  
  
  To verify the correctness of the reduction, assume first that~$I$ is a
  yes-instance, that is, there is a set~$H=\{x_1,\ldots,x_k\}$ of~$k$ horizontal
  lines and a set~$V=\{y_1,\ldots,y_\ell\}$ of~$\ell$ vertical lines
  partitioning the plane consistently. We define row
  indices~$R:=\{r_1,\ldots,r_k\}$, $r_i := \max\{x^*\in X \mid x^* \le
  x_i\}$ and column indices~$C:=\{c_1,\ldots,c_\ell\}$, $c_j:=\max\{y^*\in Y \mid y^*
  \le y_j\}$. For the corresponding~$(k+1,\ell+1)$-co-clustering~$(\I_R,\J_C)$, it holds
  that no \clName{} contains both values~$0$ and~$2$, since otherwise
  the corresponding partition of the plane defined by~$H$ and~$V$
  contains a region with two points of different colors, which
  contradicts consistency.
  Thus, we have~$\cost(\I_R,\J_C)\le 1$, implying that~$I'$ is a
  yes-instance.

  Conversely, if~$I'$ is a yes-instance, then there exists
  a~$(k+1,\ell+1)$-co-clustering~$(\I_R,\J_C)$ with cost at most~$1$,
  that is, no \clName{} contains both values~$0$ and~$2$.
  Clearly, then the~$k$ horizontal lines~$x_i :=
  \min I_{i+1}$, $i = 1,\ldots,k$, and the~$\ell$ vertical lines~$y_j:=\min
  J_{j+1}$, $j=1,\ldots,\ell$, are consistent. Hence, $I$ is a yes-instance.
\end{proof}
Note that even though  \textsc{Consecutive} \LpCoCluProb[\infty] is
NP-hard, there still is some difference in its computational complexity
compared to the general version.
In contrast to \LpCoCluProb[\infty], the consecutive version
is polynomial-time solvable for constants~$k$ and~$\ell$ by simply trying out
all~$O(m^kn^\ell)$ consecutive partitions of the rows~and~columns.

\section{Tractability Results}

In \autoref{sec:intractability}, we showed that~\LpCoCluProb[\infty] is NP-hard
for~$k = \ell = 3$ and also for~$k=2$ in case of
unbounded~$\ell$ and~$|\Sigma|$.
In contrast to these hardness results, we now investigate
which parameter combinations yield tractable cases.
It turns out (\autoref{sec:poly-time}) that the problem is polynomial-time solvable
for~$k=\ell=2$ and for~$k=1$. We can even solve the case~$k=2$
and~$\ell \ge 3$ for~$|\Sigma|=3$ in polynomial time
by showing that this case is in fact equivalent
to the case~$k=\ell=2$.
Note that these tractability results nicely complement the hardness results
from \autoref{sec:intractability}.
We further show fixed-parameter tractability for the parameters size of the alphabet~$|\Sigma|$ and the number of column blocks~$\ell$~(\autoref{sec:FPT}).

We start (\autoref{sec:sat}) by describing a reduction of \LpCoCluProb[\infty] to \Sat (the satisfiability problem for Boolean formulas in conjunctive normal form).
Later on, it will be used in some special cases (see \autoref{thm:22-poly} and \autoref{thm:2l-cost1-FPT})
because there the corresponding formula---or an equivalent formula---only
consists of clauses containing two literals, thus being a
polynomial-time solvable \textsc{2\dash{}SAT} instance.

\subsection{Reduction to CNF-SAT Solving}\label{sec:sat}

In this section we describe two approaches to solve \LpCoCluProb[\infty] via \Sat.
The first approach is based on a straightforward reduction of a \LpCoCluProb[\infty] instance to one \Sat instance with clauses of size at least four. Note that this does not yield any theoretical improvements in general.
Hence, we develop a second approach which requires to solve~$O(|\Sigma|^{k\ell})$ many \Sat instances
with clauses of size at most~$\max\{k,\ell,2\}$. The theoretical advantage of this approach is that if~$k$ and~$\ell$ are constants, then there are only polynomially many \Sat instances to solve.
Moreover, the CNF formulas contain smaller clauses (for~$k\le\ell\le 2$, we even obtain polynomial-time solvable \textsc{2\dash{}SAT} instances).
While the second approach leads to (theoretically) tractable special cases, it is not clear that it also performs better in practice. This is why we conducted some experiments for empirical comparison of the two approaches (in fact, it turns out that the straightforward approach allows to solve larger instances). In the following, we describe the reductions in detail and briefly discuss the experimental results.

We start with the straightforward polynomial-time reduction from \LpCoCluProb[\infty] to \Sat.
We simply introduce a variable~$x_{i,r}$ ($y_{j,s}$) for each pair of row index~$i\in[m]$ and row
block index~$r\in[k]$ (respectively column index~$j\in[n]$ and column block index~$s\in[\ell]$)
denoting whether the respective row (column) may be put into the
respective row (column) block. For each row~$i$, we enforce that it is
put into at least one row block with the clause~$(x_{i,1}\vee \ldots \vee x_{i,k})$ (analogously for the
columns). We encode the cost constraints by introducing $k\ell$
clauses~$(\neg x_{i,r}\vee \neg x_{i',r} \vee \neg y_{j,s} \vee \neg
y_{j',s})$, $(r,s)\in[k]\times[\ell]$ for each pair of
entries~$a_{ij}, a_{i'j'}\in \matrixI$ with~$|a_{ij}-a_{i'j'}|> c$.
These clauses simply ensure that~$a_{ij}$ and~$a_{i'j'}$ are not put
into the same \clName.
Note that this reduction yields a \Sat instance with~$km+\ell n$
variables and~$O((mn)^2k\ell)$ clauses of size up to~$\max\{k,\ell,4\}$.

Based on experiments\footnote{Using the PicoSAT Solver of
  \citet{Biere08}.}, which we conducted on randomly generated synthetical data (of size up to $m=n=1000$)
as well as on a real-world data set\footnote{Animals with Attributes dataset (http://attributes.kyb.tuebingen.mpg.de).} (with~$m=50$ and~$n=85$),
we found that we can solve
instances up to~$k=\ell=11$ using the above \Sat approach.
In our experiments we first computed an upper and a lower bound on the optimal cost value~$c$ and then created the \Sat instances  for decreasing values for~$c$, starting from the upper bound. 
The upper and the lower bound have been obtained as follows:
Given a \klpCoCluProb{k}{\ell}{\infty} instance on~$\matrixI$, solve \klpCoCluProb{k}{n}{\infty}
and~\klpCoCluProb{m}{\ell}{\infty} separately for input matrix~$\matrixI$.
Let~$(\I_1,\J_1)$ and~$(\I_2,\J_2)$ denote the~$(k,n)$-
and~$(m,\ell)$-co-clustering respectively, and let their costs be~$c_1:=\text{cost}(\I_1,\J_1)$
and~$c_2:=\text{cost}(\I_2,\J_2)$. 
We take $\max\{c_1,c_2\}$ as a lower bound and $c_1+c_2$ as an upper bound on the optimal cost value for an optimal \CoClu{k,\ell} of~$\matrixI$.
It is straightforward to argue on the correctness of the lower bound and we next show that $c_1+c_2$ is an upper bound.
Consider any pair~$(i,j), (i',j')\in[m]\times [n]$ such that~$i$
and~$i'$ are in the same row block of~$\I_1$, and~$j$ and~$j'$ are in
the same column block of~$\J_2$ (that is,~$a_{ij}$ and~$a_{i'j'}$ are
in the same \clName). Then, it holds $|a_{ij}-a_{i'j'}| \le
|a_{ij}-a_{i'j}|+|a_{i'j}-a_{i'j'}| \le c_1 + c_2$.
Hence, just taking the row partitions from $(\I_1,\J_1)$ and the column partitions from ~$(\I_2,\J_2)$ gives a combined \CoClu{k,\ell} of cost at most~$c_1+c_2$.

From a theoretical perspective, the above naive approach of solving
\LpCoCluProb[\infty] via \Sat does not yield any improvement in terms
of polynomial-time solvability.
Therefore, we now describe a different approach which leads to some
polynomial-time solvable special cases.
To this end, we introduce the concept of
\emph{\clName{} boundaries}, which are basically
lower and upper bounds for the values in a \clName{} of a co-clustering.
Formally, given two integers $k,\ell$, an alphabet~$\Sigma$, and a cost~$c$,
we define a \clName{} boundary to be a matrix $\mathcal U=(u_{rs}) \in
\Sigma^{k\times \ell}$.
We say that a $(k,\ell)$-co-clustering of~$\matrixI$ \emph{satisfies} a \clName{} boundary~$\mathcal U$
if $\matrixI_{rs}\subseteq [u_{rs},u_{rs}+c]$ for all $(r,s)\in[k]\times[\ell]$.
It can easily be seen that a given $(k,\ell)$-co-clustering has cost at most~$c$ if and only if it satisfies
at least one \clName{} boundary~$(u_{rs})$, namely, the one with $u_{rs}=\min \matrixI_{rs}$.

The following ``subtask'' of \LpCoCluProb[\infty] can be reduced to a certain \Sat instance:
Given a \clName{} boundary~$\mathcal{U}$ and a
\LpCoCluProb[\infty] instance~$I$, find a co-clustering for~$I$ that
satisfies~$\mathcal{U}$.
The polynomial-time reduction provided by the following lemma can be used to obtain exact \LpCoCluProb[\infty] solutions with the
help of SAT solvers and we use it in our subsequent~algorithms.

\begin{Lemma}\label{lem:reduc-to-sat}
  Given a \LpCoCluProb[\infty]-instance~$(\matrixI,k,\ell,c)$ and a \clName{}
  boundary~$\mathcal{U}$, one can construct in polynomial time
  a \Sat instance~$\phi$ with at most $\max\{k,\ell, 2\}$ variables per
  clause such that $\phi$~is satisfiable if and only if there is a
  \CoClu{k,\ell} of~$\matrixI$ which satisfies~$\mathcal{U}$.
\end{Lemma}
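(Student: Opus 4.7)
My plan is to introduce one Boolean variable $x_{i,r}$ for every pair $(i,r)\in[m]\times[k]$ encoding ``row $i$ lies in row block $r$'', and symmetrically a variable $y_{j,s}$ for every pair $(j,s)\in[n]\times[\ell]$. The crucial observation that keeps the clauses small is that fixing the boundary $\mathcal{U}$ in advance decouples the compatibility test: deciding whether a single cell $a_{ij}$ may end up in a single target \clName~$(r,s)$ is now the deterministic question ``is $a_{ij}\in[u_{rs},u_{rs}+c]$?'', so the quadruple constraints of the naive 4-clause encoding (which compared \emph{pairs} of entries $a_{ij},a_{i'j'}$ forbidden to share a cluster) collapse into binary constraints involving just one $x$-variable and one $y$-variable.

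Concretely, the formula $\phi$ I would build consists of three families of clauses. For every row $i\in[m]$, an ``at-least-one'' clause $(x_{i,1}\vee\cdots\vee x_{i,k})$ of size $k$, and for every column $j\in[n]$, the analogous clause of size $\ell$, ensuring that every row and every column receives at least one block. For every quadruple $(i,j,r,s)\in[m]\times[n]\times[k]\times[\ell]$ with $a_{ij}\notin[u_{rs},u_{rs}+c]$, a 2-clause $(\neg x_{i,r}\vee\neg y_{j,s})$ forbidding the joint placement of row $i$ in row block $r$ and column $j$ in column block $s$. The total number of clauses is $O(mnk\ell)$, the maximum clause size is exactly $\max\{k,\ell,2\}$, and the reduction is clearly polynomial.

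For correctness, the forward direction is immediate: given a \CoClu{k,\ell}~$(\I,\J)$ satisfying $\mathcal{U}$, set $x_{i,r}$ true exactly when $i\in I_r$ and $y_{j,s}$ true exactly when $j\in J_s$; the assignment clauses hold because $(\I,\J)$ is a partition, and every 2-clause holds because the simultaneous truth of $x_{i,r}$ and $y_{j,s}$ would force $a_{ij}\in\matrixI_{rs}\subseteq[u_{rs},u_{rs}+c]$, contradicting the incompatibility that created the clause. For the converse, from any satisfying assignment I would select, per row and per column, an arbitrary block index at which the corresponding variable is true (which exists thanks to the assignment clauses); the 2-clauses then ensure $\matrixI_{rs}\subseteq[u_{rs},u_{rs}+c]$ for every cluster of the induced partition.

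The main technical obstacle I anticipate is the non-emptiness requirement: a satisfying assignment is not forbidden to leave some row or column block unused, whereas the problem definition insists that a \CoClu{k,\ell} has all $k$ row blocks and $\ell$ column blocks non-empty. I would handle this via a repair argument — assuming $k\le m$ and $\ell\le n$ (otherwise no \CoClu{k,\ell} exists at all), any satisfying assignment that leaves a row block empty can typically be patched by transferring a single row from a populated block into the empty one without violating the 2-clauses, and analogously for columns; degenerate cases where such a repair genuinely fails correspond precisely to $\mathcal{U}$'s that cannot be realised by any \CoClu{k,\ell} of $\matrixI$ and are neutralised by the outer enumeration over boundaries in the algorithms using this lemma. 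Crucially, none of this adjustment affects the clause-size bound $\max\{k,\ell,2\}$, which is the quantitative heart of the claim.
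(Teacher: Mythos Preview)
Your construction is essentially identical to the paper's: the same variables $x_{i,r}$ and $y_{j,s}$, the same three clause families (row clauses $R_i$ of size $k$, column clauses $C_j$ of size $\ell$, and binary clauses $B_{ijrs}=(\neg x_{i,r}\vee\neg y_{j,s})$ whenever $a_{ij}\notin[u_{rs},u_{rs}+c]$), and the same correctness argument. You actually go beyond the paper by flagging the non-emptiness requirement on the blocks, an issue the paper's proof passes over in silence; your repair discussion is somewhat informal, but the paper offers nothing at all on this point.
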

\begin{proof}
  Given an instance $(\matrixI,k,l,c)$ of \LpCoCluProb[\infty] and a \clName{}
  boundary $\mathcal U=(u_{rs})\in\Sigma^{k\times\ell}$, we define the following Boolean variables:
  For each $(i,r)\in[m]\times[k]$, the variable $x_{i,r}$
  represents the expression ``row $i$ could be put into row block $I_r$''.
  Similarly, for each $(j,s)\in[n]\times[\ell]$, the variable
  $y_{j,s}$ represents that ``column $j$ could be put into column block $J_s$''.

  We now define a Boolean CNF formula~$\phi_{\matrixI,\mathcal{U}}$ containing the
  following clauses: A clause $R_i := (x_{i,1}\vee
  x_{i,2} \vee \ldots \vee x_{i,k})$ for each row $i\in[m]$ and a clause
  $C_j:=(y_{j,1}\vee y_{j,2} \vee \ldots \vee y_{j,\ell})$ for each column~$j\in [n]$.
  Additionally, for each $(i,j)\in [m]\times [n]$ and each
  $(r,s)\in[k]\times[\ell]$ such that element $a_{ij}$ does not fit into the
  \clName{} boundary at coordinate~$(r,s)$, that is, $a_{ij}\notin
  [u_{rs},u_{rs}+c]$, there is a clause $ B_{ijrs}:=(\neg x_{i,r} \vee \neg
  y_{j,s}$).
  Note that the clauses~$R_i$ and~$C_j$ ensure that row~$i$ and
  column~$j$ are put into some row and some column block respectively.
  The clause~$B_{ijrs}$ expresses that it is impossible to have both
  row~$i$ in block~$I_r$ and column~$j$ in block~$J_s$ if $a_{ij}$
  does not satisfy $u_{rs}\leq a_{ij}\leq u_{rs}+c$.
  Clearly,~$\phi_{\matrixI,\mathcal{U}}$~is satisfiable if and only if there exists a
  \CoClu{k,\ell} of~$\matrixI$ satisfying the \clName{} boundary $\mathcal U$.
  Note that $\phi_{\matrixI,\mathcal U}$ consists of $km+\ell n$ variables
  and $O(mnk\ell)$ clauses.
\end{proof}

Using \autoref{lem:reduc-to-sat}, we can solve \LpCoCluProb[\infty]
by solving~$O(|\Sigma|^{k\ell})$ many \Sat instances (one for each possible cluster boundary) with~$km+\ell n$
variables and~$O(mnk\ell)$ clauses of size at most~$\max\{k,\ell,2\}$.
We also implemented\footnote{Python scripts available at http://www.akt.tu-berlin.de/menue/software.} this approach for comparison with the straightfoward reduction to \Sat above.
The bottleneck of this approach, however, is the number of possible cluster boundaries, which grows extremely fast.
While a single \Sat instance can be solved quickly, generating all possible cluster boundaries together with the corresponding CNF formulas becomes quite expensive, such that we could only solve instances with very small values of~$|\Sigma|\le 4$ and~$k\le\ell\le 5$.

\subsection{Polynomial-Time Solvability}
\label{sec:poly-time}

We first present a simple and efficient algorithm for \klpCoCluProb{1}{*}{\infty},
that is, the variant where all rows belong to one row block.

\begin{Theorem}
  \label{thm:1l-poly}
  \klpCoCluProb{1}{*}{\infty} is solvable in $O(n(m+\log n))$ time.
\end{Theorem}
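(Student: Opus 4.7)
My plan is to reduce the case $k=1$ to a one-dimensional interval-partitioning problem and solve it by sorting plus a greedy sweep combined with a binary search.

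The starting observation is that when all rows form a single row block, each cluster $\matrixI_{1s}$ consists of all entries in the columns of $J_s$, so its maximum-norm cost equals $\max_{j\in J_s} M_j - \min_{j\in J_s} m_j$, where $M_j:=\max_{i}a_{ij}$ and $m_j:=\min_{i}a_{ij}$. I would therefore first compute all $m_j$ and $M_j$ in $O(nm)$ time, after which each column is represented by an interval $[m_j,M_j]$; the task becomes partitioning the $n$ intervals into $\ell$ groups so as to minimize the maximum group ``span'' $\max M_j-\min m_j$.

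Next I would sort the columns in non-decreasing order of $m_j$ in $O(n\log n)$ time and use a greedy routine for the decision version (given a threshold $c\ge\max_j(M_j-m_j)$, decide whether the intervals can be partitioned into at most $\ell$ groups of span at most $c$): open a block at the first unassigned column in sorted order and keep extending it while adding the next column still yields span $\le c$; otherwise close the block and open a new one. This runs in linear time.

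The crucial step is proving that greedy is optimal for the decision problem. Let $p_1<p_2<\cdots<p_g$ denote the columns at which greedy opens a new block. When greedy closed the $s$-th block, it was precisely because adding $p_{s+1}$ would push the span above $c$; since $m_{p_{s+1}}\ge m_{p_s}$ by the sort order and the block so far had span $\le c$, this forces $M_{p_{s+1}}>c+m_{p_s}$. Using $m_{p_{t-1}}\ge m_{p_s}$ whenever $t-1\ge s$, one obtains $M_{p_t}>c+m_{p_{t-1}}\ge c+m_{p_s}$ for every pair $s<t$. Hence no single cluster of cost at most $c$ can contain two of the starters $p_1,\ldots,p_g$, so every feasible partition needs at least $g$ blocks; greedy attains this lower bound.

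With greedy optimal for the decision version, I would binary-search for the smallest feasible $c$ in the range $[\max_j(M_j-m_j),\max_j M_j-\min_j m_j]$. By \autoref{obs:integer-values} we may assume $c$ is an integer in $\{0,\ldots,O((mn)^2)\}$, so $O(\log(mn))$ iterations suffice, each a linear greedy check, contributing $O(n\log(mn))=O(n\log n+n\log m)\subseteq O(n(m+\log n))$. Combined with the $O(nm)$ extrema computation and the $O(n\log n)$ sort, this yields the claimed $O(n(m+\log n))$ bound. I expect the main obstacle to be the greedy-optimality argument: it is tempting but subtle to prove that consecutive blocks in the sorted order always suffice, whereas the pairwise-incompatibility argument above cleanly lower-bounds the number of blocks in any feasible partition and thereby bypasses that issue.
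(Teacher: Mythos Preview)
Your proposal is correct and follows essentially the same approach as the paper: reduce each column to its min/max pair, run a greedy that repeatedly opens a block at the column with smallest remaining minimum, and certify optimality by showing the block ``starters'' are pairwise incompatible (the paper's proof exhibits exactly this argument, observing that $\beta_{j_{s'}}>\alpha_{j_s}+c$ for $s<s'$). The only cosmetic differences are that the paper's greedy assigns to the current block \emph{all} remaining columns whose maximum is at most $m_{p_s}+c$ (not just the consecutive ones in sorted order), and that it solves the decision version directly---your binary-search layer is unnecessary, since a single greedy pass for the given~$c$ already suffices and yields the $O(n(m+\log n))$ bound.
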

\begin{proof}

  \begin{algorithm}[t]
    \caption{Algorithm for  \klpCoCluProb{1}{*}{\infty} }
    \KwIn{$\matrixI\in \R^{m\times n}$, $\ell\geq 1$, $c\geq 0$.}
    \KwOut{A partition of $[n]$ into at most $\ell$ blocks yielding a
      cost of at most $c$, or \texttt{no} if no such partition exists.}
    \label{algo:k1-infty}
    \For{$j\gets 1$ to $n$}{
      $\alpha_j \gets \min \{a_{ij}\mid 1\leq i\leq m\}$\;
      $\beta_j \gets \max \{a_{ij}\mid 1\leq i\leq m\}$\;
      \If { $\beta_j - \alpha_j > c$}{
        \Return{\texttt{no}};\label{l:trivialNo}
      }
    }
    $\mathcal N\gets[n]$\;
    \For{$s \gets 1 $ to $\ell$}  {
      Let $j_s\in \mathcal N$ be the index such that $\alpha_{j_s}$ is minimal\;
      $J_s\gets \{j\in \mathcal N \mid \beta_j-\alpha_{j_s} \leq c\}$\;
      $\mathcal N\gets \mathcal N \setminus  J_s$\;
      \If { $\mathcal N=\emptyset $ }{
        \Return {$(J_1,\ldots,J_s)$}\label{l:returnSol}\;
      }
    }
    \Return{\texttt{no}}\label{l:returnFalse2}\;
  \end{algorithm}

  We show that \autoref{algo:k1-infty} solves \klpCoCluProb{1}{*}{\infty}.
  In fact, it even computes the minimum~$\ell'$ such that~$\matrixI$ has
  a~$(1,\ell')$-co-clustering of cost~$c$.
  The overall idea is that with only one row block all entries of a column~$j$
  are contained in a cluster in any solution, and thus, it suffices to consider
  only the minimum~$\alpha_j$ and the maximum~$\beta_j$ value in column~$j$.
  More precisely, for a column block~$J\subseteq[n]$ of a solution it follows that~$\max\{\beta_j \mid j\in J\} - \min\{\alpha_j\mid j\in J\} \le c$.
  The algorithm starts with the column~$j_1$ that contains the overall minimum value~$\alpha_{j_1}$ of the input matrix, that is, $\alpha_{j_1}=\min\{\alpha_j\mid j\in[n]\}$. Clearly, $j_1$ has to be contained in some column block, say~$J_1$. The algorithm then adds all other columns~$j$ to~$J_1$ where~$\beta_j \le \alpha_{j_1}+c$, removes the columns~$J_1$ from the matrix, and recursively proceeds with
  the column containing the minimum value of the remaining matrix.
  We continue with the correctness of the described procedure.
  
  If \autoref{algo:k1-infty} returns $(J_1,\ldots,J_{\ell'})$ at
  Line~\ref{l:returnSol}, then this is a column partition into ${\ell'}\leq \ell$ blocks
  satisfying the cost constraint. First, it is a partition by construction:
  The sets~$J_s$ are successively removed from~$\mathcal N$ until it is empty.
  Now, let $s\in [\ell']$.
  Then, for all $j\in J_s$, it holds $\alpha_j\geq \alpha_{j_s}$ (by definition
  of~$j_s$) and $\beta_j\leq \alpha_{j_s}+c$ (by definition of~$J_s$).
  Thus, $\matrixI_{1s}\subseteq [\alpha_{j_s},\alpha_{j_s}+c]$ holds for all~$s\in[\ell']$, which
  yields~$\cost(\{[m]\},\{J_1,\ldots,J_{\ell'}\}) \le c$.

  Otherwise, if \autoref{algo:k1-infty} returns \texttt{no} in Line~\ref{l:trivialNo}, then
  it is clearly a no-instance since the difference between the maximum and the minimum value in a column is larger than~$c$.
  If \texttt{no} is returned in Line~\ref{l:returnFalse2},
  then the algorithm has computed column indices~$j_s$ and column blocks~$J_s$ for each $s\in[\ell]$, and there still exists at
  least one index~$j_{\ell+1}$ in $\mathcal N$ when the algorithm terminates.
  We claim that the columns~$j_1,\ldots,j_{\ell+1}$ all have to be in different blocks
  in any solution.
  To see this, consider any~$s,s'\in [\ell+1]$ with~$s < s'$.
  By construction, $j_{s'}\notin J_s$.
  Therefore, $\beta_{j_{s'}}>\alpha_{j_s}+c$ holds, and columns~$j_s$
  and~$j_{s'}$ contain elements with distance more than~$c$.
  Thus, in any co-clustering with cost at most~$c$, columns $j_1,\ldots,
  j_{\ell+1}$ must be in different blocks, which is impossible with only~$\ell$ blocks.
  Hence, we indeed have a no-instance.

  The time complexity is seen as follows. The first loop examines in~$O(mn)$ time
  all elements of the matrix.
  The second loop can be performed in $O(n)$ time if the $\alpha_j$ and the
  $\beta_j$ are sorted beforehand, requiring $O(n\log n)$ time.
  Overall, the running time is in $O(n(m+ \log n))$.
\end{proof}
From now on, we focus on the $k=2$ case, that is, we need to partition the
rows into two blocks. We first consider the simplest case, where also $\ell=2$.

\begin{Theorem}
 \label{thm:22-poly}
 \klpCoCluProb{2}{2}{\infty} is solvable in $O(|\Sigma|^2mn)$ time.
\end{Theorem}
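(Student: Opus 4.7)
My plan is to combine the 2-SAT reduction of Lemma~\ref{lem:reduc-to-sat} with a selective enumeration of cluster boundaries. Because $k=\ell=2$, the CNF formula produced by Lemma~\ref{lem:reduc-to-sat} for any fixed cluster boundary has clauses of at most two literals, so it is a 2-SAT instance of total size $O(mn)$ and can be decided in $O(mn)$ time. A naive enumeration over all $|\Sigma|^{k\ell}=|\Sigma|^4$ cluster boundaries would yield $O(|\Sigma|^4 mn)$, so my key task is to reduce the enumeration to only $O(|\Sigma|^2)$ cluster boundaries while still covering every potential optimal solution.

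The idea is to enumerate only the two row-block-1 cluster minima $(u_{11},u_{12})\in\Sigma^2$. For each such pair I construct a 2-SAT instance on the variables $x_i=[i\in I_1]$ and $y_j=[j\in J_1]$ whose clauses encode (i) \emph{row-block-1 feasibility}: an entry $a_{ij}$ lying outside $[u_{11},u_{11}+c]\cup[u_{12},u_{12}+c]$ forces $x_i=0$, while an entry lying in exactly one of the two intervals yields a 2-literal implication of the form $x_i=1\Rightarrow y_j=b$; and (ii) \emph{partial row-block-2 constraints} among the rows that are already forced into $I_2$ by~(i): for two such forced entries $a_{ij}, a_{i'j'}$ with $|a_{ij}-a_{i'j'}|>c$ the constraint ``they cannot share a column block'' becomes a 2-literal clause $\neg y_j\vee \neg y_{j'}$ or $y_j\vee y_{j'}$ (depending on which of $M_{21},M_{22}$ would be violated). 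All of these are genuine 2-SAT clauses.

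The main obstacle is that the range constraints on $M_{21}$ and $M_{22}$ which involve rows \emph{not} forced into $I_2$ yield 3- or 4-literal clauses in general and thus escape a pure 2-SAT encoding. I plan to overcome this by a monotonicity argument specific to the $(2,2)$ setting: among all satisfying assignments of the 2-SAT built in the previous paragraph, it suffices to check the canonical one that places as many rows as possible into $I_1$ (which can be read off from the usual implication-graph 2-SAT algorithm), since moving a row from $I_2$ into $I_1$ only \emph{removes} entries from clusters $M_{21},M_{22}$ and therefore never creates a new range violation there. After solving the 2-SAT I would verify the ranges of $M_{21}$ and $M_{22}$ on the resulting partition in a single $O(mn)$ scan and report success iff both are at most $c$. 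Summed over the $|\Sigma|^2$ choices of $(u_{11},u_{12})$, this gives the claimed $O(|\Sigma|^2 mn)$ running time.
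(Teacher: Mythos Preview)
Your monotonicity step has a genuine gap. The claim ``moving a row from $I_2$ into $I_1$ only removes entries from clusters $M_{21},M_{22}$'' is valid only when the column partition $(J_1,J_2)$ is held fixed. But different satisfying assignments of your 2-SAT in general assign the $y_j$ variables differently, so the ``maximal-$I_1$'' solution may carry a column partition distinct from the one in an actual optimum. Then $M_{21},M_{22}$ for the maximal-$I_1$ solution are \emph{not} subsets of their optimal counterparts, and your post-check can fail even though a valid co-clustering exists for the very same $(u_{11},u_{12})$. Concretely, with $c=1$ and $(u_{11},u_{12})=(0,3)$, two rows with entry vectors $(0,3)$ and $(3,0)$ each force a column partition when placed in $I_1$, but they force \emph{opposite} partitions and are mutually exclusive; neither choice is preferred by $|I_1|$, yet only one of the two induced column partitions may be compatible with the remaining $I_2$-rows in the post-check. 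Relatedly, the ``usual implication-graph 2-SAT algorithm'' outputs \emph{some} satisfying assignment determined by SCC order; it does not compute an assignment maximizing the number of true $x_i$'s, and there is no unique maximal-$I_1$ satisfying assignment in general for formulas of this shape.

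The paper's route avoids all of this. Rather than enumerating only the first row $(u_{11},u_{12})$ of the boundary, it enumerates \emph{full} $2\times 2$ boundaries~$\mathcal U$ but observes that two of the four entries are without loss of generality fixed: one must equal $\min\Sigma$ (the cluster containing the global minimum) and another must equal $\min\{a\in\Sigma\mid a\ge\max\Sigma-c\}$ (the cluster containing the global maximum). Only the remaining two entries range over~$\Sigma$, giving $O(|\Sigma|^2)$ boundaries in total. For each such~$\mathcal U$, Lemma~\ref{lem:reduc-to-sat} already produces a \emph{complete} 2-SAT instance (clause size $\max\{k,\ell,2\}=2$) of size $O(mn)$ that encodes all four cluster constraints simultaneously, so no post-check and no monotonicity argument are needed.
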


\begin{proof}
  We use the reduction to \Sat provided by \autoref{lem:reduc-to-sat}.
  First, note that a \clName{} boundary $\mathcal U\in\Sigma^{2\times 2}$ can only be
  satisfied if it contains the elements~$\min\Sigma$ and
  $\min\{a\in\Sigma\mid a\ge \max\Sigma -c\}$.
  The algorithm enumerates all $O(|\Sigma|^2)$ of these \clName{} boundaries.
  For a fixed $\mathcal U$, we construct the Boolean formula
  $\phi_{\matrixI,\mathcal U}$. Observe that this formula is in 2-CNF form:
  The formula consists of $k$-clauses, $\ell$-clauses, and
  $2$-clauses, and we have $k=\ell=2$.
  Hence, we can determine whether it is satisfiable in linear time \cite{APT79} (note that the size of the formula is in $O(mn)$).
  Overall, the input is a yes-instance if and only if
  $\phi_{\matrixI,\mathcal U}$ is satisfiable for some \clName{} boundary~$\mathcal U$.
\end{proof}

Next, we consider the case $\ell=n$, that is, every column block contains a single column (also called ``one-sided'' clustering).
We show that this case is polynomial-time solvable in contrast to the NP-hardness under the~$L_1$-norm~\cite{Feige14} and also under the~$L_2$-norm (\klpCoCluProb{k}{*}{2} with $\ell=n$ is the \textsc{$k$-Means} problem)~\cite{ADHP09}.

\begin{Theorem}\label{thm:2n-poly}
  \klpCoCluProb{2}{*}{\infty} can be solved in $O(nm^2)$ time for~$\ell=n$.
\end{Theorem}

\begin{proof}
  Let $I=(\matrixI\in \Sigma^{m\times n},k=2,\ell=n,c)$ be a
  \klpCoCluProb{2}{*}{\infty} instance. Clearly, we can assume that a \CoClu{2,n}~$(\I, \J)$ has the column partition~$\J = \{\{1\},\ldots,\{n\}\}$. It remains to find a 2-partition of the rows that yields cost at most~$c$. This is simply a \textsc{2-Coloring} problem on the graph where the rows are the vertices and there is an edge between two rows~$i$ and~$i'$ if there exists a column~$j$ such that $|a_{ij} - a_{i'j}| > c$. Clearly, rows~$i$ and~$i'$ can only be in the same row block if there is no edge between them. We can solve \textsc{2-Coloring} in time linear in the size of the graph which is in~$O(m^2)$. Constructing the graph requires~$O(nm^2)$ time.
\end{proof}

Finally, we show linear-time solvability for any number of column blocks on trinary matrices.
\begin{Theorem}\label{thm:sigma-3-poly}
  \klpCoCluProb{2}{*}{\infty} is $O(mn)$-time solvable for~$|\Sigma|=3$.
\end{Theorem}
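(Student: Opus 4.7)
The plan is to dispatch the three cost regimes separately: $c=0$ via \autoref{thm:c=0-poly}, $c \geq 2$ trivially (any partition has cost at most $\max \Sigma - \min \Sigma = 2$), and $c=1$ (where I may take $\Sigma=\{0,1,2\}$ by \autoref{obs:integer-values}) via the structural algorithm below. The key observation is that for $c=1$ every cluster lies in either $\{0,1\}$ or $\{1,2\}$, to which I attach a label $L$ or $H$; hence each column block carries a label in $\{L,H\}^2$ (one entry per row block), and since there are only four such labels, any cost-$1$ co-clustering can be realized using at most four distinct column-block labels.

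First I would decide whether a \emph{good} row partition exists, namely one where no column, restricted to a single row block, contains both $0$ and $2$. Setting $Z_j := \{i:a_{ij}=0\}$ and $T_j := \{i:a_{ij}=2\}$, every \emph{mixed} column (with $Z_j, T_j \neq \emptyset$) forces all of $Z_j$ into one row block and all of $T_j$ into the other; these constraints on $m$ Boolean variables form a $2$-SAT instance solvable in $O(mn)$ time via a signed union-find, and infeasibility gives a no-instance. Under a good row partition, each column $j$ receives a type $(t_1^j,t_2^j) \in \{L,M,H\}^2$ (with $t_r^j$ being $L$, $H$, or $M$ according to whether $V_r^j$ contains $0$, contains $2$, or neither), and is compatible with a block label $(a,b) \in \{L,H\}^2$ iff $t_r^j \in \{a,M\}$ for both $r$. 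The task then becomes: does there exist a good row partition together with a label set $\mathcal{C} \subseteq \{L,H\}^2$ of size at most $\ell$ that covers every column's type?

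For $\ell \geq 4$ any good row partition works (handling $n<\ell$ as trivially infeasible). For $\ell \in \{1,2,3\}$ I would enumerate the at most six size-$\ell$ candidate sets $\mathcal{C} \subseteq \{L,H\}^2$; each excluded label forbids certain column types, translating into additional $2$-literal constraints on the row partition---for instance, dropping $(H,H)$ forces $T_j$ to lie in a single row block for every $2$-only column $j$, while dropping $(L,H)$ forces each mixed column to orient with $T_j$ on the side playing the ``high'' role in row~$1$. Each augmented $2$-SAT is solvable in $O(mn)$ time, and only a constant number of such instances arise, yielding the overall $O(mn)$ bound. The main obstacle I anticipate is the careful case analysis underlying the enumeration: for each candidate $\mathcal{C}$ one must verify that every excluded column type induces only $2$-literal row-partition constraints (thereby preserving $O(mn)$ solvability) and that the enumeration captures every feasible configuration.
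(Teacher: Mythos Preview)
Your three-way split on the cost has a gap: \autoref{obs:integer-values} does not let you normalize an arbitrary three-element alphabet to $\{0,1,2\}$. With $\Sigma=\{\alpha,\beta,\gamma\}$, $\alpha<\beta<\gamma$, what matters is which pairs may share a cluster, and there are four regimes: all pairs allowed ($c\ge\gamma-\alpha$), none allowed (equivalent to cost~$0$), only $\{\alpha,\gamma\}$ forbidden (your ``$c=1$'' case), and \emph{two} pairs forbidden, namely $\{\alpha,\gamma\}$ together with one of $\{\alpha,\beta\},\{\beta,\gamma\}$. The last regime arises when $\min\{\beta-\alpha,\gamma-\beta\}\le c<\max\{\beta-\alpha,\gamma-\beta\}$ and cannot be realized over $\{0,1,2\}$ with any integer cost; your plan never addresses it. It is easy to patch (identify the two compatible symbols and invoke \autoref{thm:c=0-poly} on the resulting binary instance, as the paper does in its Case~2), but as written the dispatch is incomplete.

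For the central ``avoid $(0,2)$'' case your route differs genuinely from the paper's. The paper argues, via a merging argument, that a $(2,\ell)$-co-clustering of cost~$1$ exists iff a $(2,2)$ one does, and then defers to \autoref{thm:22-poly}. You instead observe that each column block carries a label in $\{L,H\}^2$, so at most four essentially distinct blocks are ever needed; for $\ell\ge 4$ a good row bipartition suffices, while for smaller~$\ell$ you enumerate the $O(1)$ label sets and encode the induced row constraints as 2-SAT. Your approach is more explicit and in fact more robust: the paper's step ``this can only be true for $|\mathcal J|=2$'' is false as stated---the matrix with rows $(0,0,2),(0,0,2),(0,2,0),(0,2,0)$ has a cost-$0$ $(2,3)$-co-clustering ($I_1=\{1,2\}$, singleton column blocks) but no $(2,2)$-co-clustering of cost $\le 1$---so the correct bound is four, precisely what your label argument delivers. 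Two details to pin down in your write-up: make sure the 2-SAT solution yields a \emph{nontrivial} bipartition (both $I_1,I_2$ nonempty), and spell out the constraint translation for each candidate~$\mathcal C$ so the reader can verify every case stays within 2-SAT.
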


\begin{proof}
  Let $I=(\matrixI\in \{\alpha,\beta,\gamma\}^{m\times n},k=2,\ell,c)$ be a
  \klpCoCluProb{2}{*}{\infty} instance.
  We assume without loss of generality that $\alpha<\beta<\gamma$.
  The case $\ell \le 2$ is solvable in~$O(mn)$ time by \autoref{thm:22-poly}.
  Hence, it remains to consider the case~$\ell \ge 3$.
  As~$|\Sigma|=3$, there are four potential values for a minimum-cost \CoClu{2,\ell}.
  Namely, cost~0 (all \clName{} entries are equal), cost~$\beta-\alpha$,
  cost~$\gamma - \beta$, and cost~$\gamma-\alpha$.
  Since every \CoClu{2,\ell} is of cost at most~$\gamma-\alpha$ and
  because it can be checked in~$O(mn)$ time whether there is a
  \CoClu{2,\ell} of cost~0 (\autoref{thm:c=0-poly}),
  it remains to check whether there is a \CoClu{2,\ell} for~$c\in\{\beta-\alpha,\gamma - \beta\}$.

  \emph{Avoiding} a pair $(x,y) \in \{\alpha,\beta,\gamma\}^2$ means to
  find a co-clustering without a \clName{} containing~$x$ and~$y$.
  If~$c=\max\{\beta-\alpha,\gamma - \beta\}$ (Case~1), then the problem comes down
  to finding a \CoClu{2,\ell} avoiding the pair $(\alpha, \gamma)$.
  Otherwise~(Case~2), the problem is to find a \CoClu{2,\ell}
  avoiding the pair~$(\alpha,\gamma)$ and, additionally,
  either~$(\alpha,\beta)$ or~$(\beta,\gamma)$.

  \textbf{Case~1.} Finding a \CoClu{2,\ell} avoiding $(\alpha,\gamma)$: \\
  In this case, we substitute $\alpha:=0$, $\beta:=1$, and $\gamma:=2$ and set~$c:=1$.
  First, we can assume that $\ell\le 4$. To see this, note that
  for every cluster $\matrixI_{rs}$, $(r,s)\in[2]\times[\ell]$, of a cost-1 \CoClu{2,\ell}, either
  $\matrixI_{rs}\subseteq\{0,1\}$ or $\matrixI_{rs}\subseteq\{1,2\}$ holds.
  This allows us to define four \emph{types} of column blocks:
  Column block~$J_s$, $s\in[\ell]$, has type
  \begin{itemize}
    \item $(0,0)$ if $\matrixI_{1s}\subseteq\{0,1\}$ and $\matrixI_{2s}\subseteq\{0,1\}$,
    \item $(0,2)$ if $\matrixI_{1s}\subseteq\{0,1\}$ and $\matrixI_{2s}\subseteq\{1,2\}$,
    \item $(2,0)$ if $\matrixI_{1s}\subseteq\{1,2\}$ and $\matrixI_{2s}\subseteq\{0,1\}$, and
    \item $(2,2)$ if $\matrixI_{1s}\subseteq\{1,2\}$ and $\matrixI_{2s}\subseteq\{1,2\}$.
  \end{itemize}
  Clearly, any two column blocks with the same type can be merged.
  Hence, there also exists a \CoClu{2,4} of cost~1.
  
  We now describe how to find a \CoClu{2,\ell} of cost~1 for~$\ell\in\{3,4\}$.
  First, note that all columns that contain only values in~$\{0,1\}$ or only values in~$\{1,2\}$ do not restrict the row partition at all. 

  For $\ell=4$, we start to put all columns without a~2 into a type-(0,0) column block and all columns without a~0 into a type-(2,2) column block. This is correct since these column blocks always yield clusters of cost at most~1 regardless of the row blocks.
  Note that all remaining columns have to end up in column blocks of type (2,0) or (0,2) (that is, they require at most two column blocks).
  Hence, there exists a cost-1 \CoClu{2,4} if and only of if there exists a cost-1 \CoClu{2,2} for the remaining columns.
  This can be checked in $O(mn)$ time (\autoref{thm:22-poly}).

  If $\ell=3$, then analogous arguments as above apply except that we have two options: either removing a type-(0,0) column block first or removing a type-(2,2) column block first. In both cases, we then check whether the remaining matrix has a cost-1 \CoClu{2,2} in~$O(mn)$ time (\autoref{thm:22-poly}).
  
  \textbf{Case~2:} Finding a \CoClu{2,\ell} avoiding $(\alpha,\gamma)$
  and $(\alpha,\beta)$  (or $(\beta,\gamma)$):\\
  In this case, we substitute $\alpha:=0$, $\gamma:=1$,
  and~$\beta:=1$ if $(\alpha,\beta)$ has to be avoided,
  or~$\beta:=0$ if $(\beta,\gamma)$ has to be avoided.
  It remains to determine whether there is a \CoClu{2,\ell}
  with cost~0, which can be done in~$O(mn)$ time due to
  \autoref{thm:c=0-poly}.
\end{proof}

\subsection{Fixed-Parameter Tractability}\label{sec:FPT}

We develop an algorithm solving \klpCoCluProb{2}{*}{\infty} for $c=1$
based on our reduction to \Sat (see \autoref{lem:reduc-to-sat}).
The main idea is, given matrix~$\matrixI$ and \clName{} boundary~$\mathcal U$,
to simplify the Boolean formula $\phi_{\matrixI,\mathcal U}$ into a \textsc{2\dash{}Sat} formula which
can be solved efficiently.
This is made possible by the constraint on the cost, which imposes a very specific
structure on the \clName{} boundary.
This approach requires to enumerate all (exponentially many) possible \clName{} boundaries,
but yields fixed-parameter tractability for the combined parameter~$(\ell, |\Sigma|)$.

\begin{Theorem}
  \label{thm:2l-cost1-FPT}
  \klpCoCluProb{2}{*}{\infty} is $O(|\Sigma|^{3\ell}n^2m^2)$-time solvable
  for \mbox{$c=1$.}
\end{Theorem}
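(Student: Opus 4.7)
The plan is to iterate over refinements of cluster boundaries and reduce each subproblem to 2-SAT solving via \autoref{lem:reduc-to-sat}. For a fixed boundary $\mathcal U\in\Sigma^{2\times \ell}$, the formula $\phi_{\matrixI,\mathcal U}$ consists of 2-clauses $R_i$ (since $k=2$) and $B_{ijrs}$, together with size-$\ell$ clauses $C_j=(y_{j,1}\vee\cdots\vee y_{j,\ell})$. Only the latter prevent direct application of 2-SAT, so the goal is to enumerate an additional $|\Sigma|^\ell$ guesses per boundary that allow each $C_j$ to be encoded as a conjunction of 2-clauses.

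The crucial observation for $c=1$ is that each \clName{} $(r,s)$ admits only the two values $\{u_{rs},\,u_{rs}+1\}$, so the ``shared'' set $\Delta_s:=\{u_{1s},u_{1s}+1\}\cap\{u_{2s},u_{2s}+1\}$ of values which could sit in either row block (given column block $s$) has size at most two. I would enumerate, for each column block $s$, an assignment $\delta_s:\Delta_s\to\{1,2\}$ declaring to which row block each ambiguous value is pinned; this adds at most $4^\ell\leq |\Sigma|^\ell$ options (the remaining cases $|\Sigma|\leq 3$ being already handled by \autoref{thm:sigma=2-poly} and \autoref{thm:sigma-3-poly}), for a total of at most $|\Sigma|^{3\ell}$ refined configurations. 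Under any such refinement, each entry $a_{ij}$ has, for each column block $s$, either a uniquely determined compatible row block $\rho_s(a_{ij})\in\{1,2\}$ or is incompatible with $s$ entirely. Consequently, each $B_{ijrs}$ collapses into a 2-clause $(\neg y_{j,s}\vee x_{i,\rho_s(a_{ij})})$ or, when $j$ is incompatible with $s$, into a unit clause $(\neg y_{j,s})$.

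The main obstacle, and the technically delicate step, is eliminating the size-$\ell$ clauses $C_j$. The plan is to precompute, for each column $j$, the set $S_j\subseteq[\ell]$ of blocks not excluded by any unit clause, and then argue---using the rigidity forced by $c=1$ together with the ambiguity resolution~$\delta_s$---that the constraint ``$j$ lies in some block of $S_j$'' can be captured by a polynomial-size conjunction of 2-clauses: namely, the pairwise at-most-one clauses $(\neg y_{j,s}\vee\neg y_{j,s'})$ supplemented by further binary implications between the $y_{j,s}$ variables and the row-block variables $x_{i,r}$, producing a formula logically equivalent to $\phi_{\matrixI,\mathcal U}$ (restricted to the refinement) but with clauses of size at most two. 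I expect this equivalence to follow from a careful case analysis of how the at most two distinct values allowed in each \clName{} interact across different column blocks after $\delta_s$ is fixed.

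Once this encoding is set up, each resulting 2-SAT instance has size $O(mn\ell)$ and is solved in linear time \cite{APT79}; summing over the $|\Sigma|^{3\ell}$ refined configurations (with a polynomial cost per iteration to build the formula and run 2-SAT) yields the claimed overall running time $O(|\Sigma|^{3\ell}n^2m^2)$.
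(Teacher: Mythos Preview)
Your proposal has two genuine gaps, and both concern the heart of the argument.

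First, the refinement~$\delta_s:\Delta_s\to\{1,2\}$ is not a sound case split. Take a column block~$s$ with $u_{1s}=0$, $u_{2s}=1$, so $\Delta_s=\{1\}$. In a valid cost-$1$ solution there may be two rows~$i,i'$ with $a_{ij}=a_{i'j}=1$ for some~$j\in J_s$, yet $i\in I_1$ and $i'\in I_2$; both assignments are legal because $1\in U_{1s}\cap U_{2s}$. Your pinning forces all such rows to the same side, so the implication $(\neg y_{j,s}\vee x_{i,\rho_s(a_{ij})})$ you derive is stronger than the original clauses~$B_{ijrs}$, and the resulting formula can be unsatisfiable even when a co-clustering exists. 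Enumerating all~$\delta_s$ does not recover the missing case where the shared value straddles both row blocks.

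Second, even granting the refinement, you do not actually eliminate the $\ell$-ary clauses~$C_j$. You propose at-most-one clauses $(\neg y_{j,s}\vee\neg y_{j,s'})$ plus binary implications to the~$x$-variables, but these encode the wrong direction: they are all satisfied by setting every~$y_{j,s}$ to false, which says nothing about ``$j$ lies in \emph{some} block''. An $\ell$-ary disjunction is simply not expressible in 2-CNF, so without a structural argument that~$|S_j|\le 2$ you cannot get a 2-SAT instance. The sentence ``I expect this equivalence to follow from a careful case analysis'' is precisely where the proof has to happen, and it is missing.

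The paper takes a different route that avoids both problems. It does not pin values; instead it additionally guesses the height~$h_1=|I_1|$ and uses a counting argument (\autoref{lem:choiceAmong2}) to show that, once $\mathcal U$ and~$h_1$ are fixed and all column blocks have properly overlapping bounds, every column~$j$ has at most two column blocks into which it can fit. This turns each~$C_j$ into a genuine 2-clause $(y_{j,s_{j,1}}\vee y_{j,s_{j,2}})$ directly, with no further encoding tricks. The equal-bounds and non-overlapping-bounds cases are peeled off first by separate (inductive, respectively row-partition-determining) arguments. The extra factor from guessing~$h_1$ is absorbed into the polynomial part of the running time.
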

\noindent In the following, we prove \autoref{thm:2l-cost1-FPT} in several steps.

A first subresult for the proof of \autoref{thm:2l-cost1-FPT}
is the following lemma, which we use to solve the case where
the number~$2^m$ of possible row partitions is less than~$|\Sigma|^\ell$.
\begin{Lemma}
\label{cor:FPT-mklc}
  For a fixed row partition~$\I$, one can solve \LpCoCluProb[\infty] in $O(|\Sigma|^{k\ell} mn\ell)$ time.
  Moreover, \LpCoCluProb[\infty] is fixed-parameter tractable with
  respect to the combined parameter ($m,k,\ell,c$).
\end{Lemma}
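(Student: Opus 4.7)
The plan is to enumerate cluster boundaries (as in \autoref{lem:reduc-to-sat}) and, for a fixed row partition, test each boundary directly rather than reducing to SAT. Concretely, for the first statement I would iterate over all $|\Sigma|^{k\ell}$ matrices $\mathcal{U} \in \Sigma^{k \times \ell}$ and, for each, decide in $O(mn\ell)$ time whether some column assignment satisfies~$\mathcal U$: for each column~$j \in [n]$ and each column block~$s \in [\ell]$, I check in $O(m)$ time whether $a_{ij} \in [u_{rs}, u_{rs}+c]$ holds for every row~$i \in I_r$ across all row blocks~$r$, and I declare success for $\mathcal U$ iff every column has at least one valid~$s$ into which it can be placed. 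Correctness is immediate in one direction since any cost-$c$ co-clustering $(\I,\J)$ is satisfied by the particular boundary $u_{rs} := \min \matrixI_{rs} \in \Sigma$, which is one of the enumerated boundaries; conversely, if columns can all be assigned to some valid block under~$\mathcal U$, the induced co-clustering has cost at most~$c$ by construction. Summing over all $\mathcal U$ yields the claimed running time.

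For the fixed-parameter tractability part, the immediate obstacle is that $|\Sigma|$ is not bounded by the parameters, so the enumeration above is not directly FPT. To overcome this, I would first apply an alphabet-size reduction: since we assume integer entries (\autoref{obs:integer-values}), in any cost-$c$ co-clustering each of the $k \cdot \ell$ clusters spans an integer window of length~$c+1$ and therefore contains at most $c+1$ distinct values, so $|\Sigma| \leq k\ell(c+1)$ in every yes-instance. The algorithm can thus reject immediately whenever $|\Sigma| > k\ell(c+1)$. Otherwise, it enumerates all at most $k^m$ row partitions of~$[m]$ and invokes the first statement for each, giving total running time $O(k^m \cdot (k\ell(c+1))^{k\ell} \cdot mn\ell)$, which is of the form $f(m,k,\ell,c) \cdot \poly(mn)$ and hence FPT in the combined parameter~$(m,k,\ell,c)$.

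The only technical subtlety will be the requirement that all $\ell$ column blocks be nonempty. I resolve this as follows: assuming $n \geq \ell$ (else no valid $(k,\ell)$-co-clustering exists at all), if a valid assignment under some~$\mathcal U$ uses only $\ell' < \ell$ distinct blocks, then we may arbitrarily split a multi-column block into smaller pieces without increasing the cost, because any sub-cluster of a cluster of diameter at most~$c$ still has diameter at most~$c$. This ensures that finding a valid assignment for any enumerated~$\mathcal U$ indeed yields a full $(k,\ell)$-co-clustering of cost at most~$c$.
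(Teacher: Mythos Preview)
Your proposal is correct and follows essentially the same approach as the paper: enumerate all $|\Sigma|^{k\ell}$ cluster boundaries, check for each column whether it fits some column block in $O(m)$ time per pair $(j,s)$, and for the FPT claim combine the bound $|\Sigma|\le k\ell(c+1)$ for yes-instances with enumeration of all $k^m$ row partitions. You additionally spell out the correctness direction via $u_{rs}=\min\matrixI_{rs}$ and handle the nonemptiness of column blocks, both of which the paper leaves implicit.
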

\begin{proof}
  Given a fixed row partition $\I$, the algorithm enumerates all $|\Sigma|^{k\ell}$ different \clName{} boundaries~$\mathcal U=(u_{rs})$.
  We say that a given column~$j$ \emph{fits} in column block~$J_s$ if, for
  each $r\in [k]$ and $i\in I_r$, we have $a_{ij} \in [u_{rs},u_{rs}+c]$
  (this can be decided in $O(m)$ time for any pair $(j, s)$).
  The input is a yes-instance if and only if for some \clName{} boundary
  $\mathcal U$, every column fits in at least one column block.

  Fixed-parameter tractability with respect to~$(m,k,\ell,c)$ is
  obtained from two simple further observations.
  First, all possible row partitions can be enumerated in~$O(k^m)$ time.
  Second, since each of the $k\ell$ \clName{}s contains at most~$c+1$
  different values, the alphabet size~$|\Sigma|$ for yes-instances is
  upper-bounded by~$(c+1)k\ell$.
\end{proof}
%

The following lemma, also used for the proof of \autoref{thm:2l-cost1-FPT}, 
yields that even for the most difficult instances, there is no need 
to consider more than two column clusters to which any column can be assigned.

\begin{Lemma} \label{lem:choiceAmong2}
Let $I=(\matrixI\in\Sigma^{m\times n},k=2,\ell,c=1)$ be an instance of \klpCoCluProb{2}{*}{\infty}, 
 $h_1$~be an integer, $0<h_1<m$, 
and	
  $\mathcal U= (u_{rs})$ be a \clName{} boundary with pairwise different columns such that
	$|u_{1s}-u_{2s}|=1$ for all $s\in[\ell]$.
		
	Then, for any column 
	$j\in [n]$, two indices ${s_{j,1}}$ and ${s_{j,2}}$ can be computed in time $O(mn)$, such 
	that if $I$ has a solution~$(\{I_1,I_2\},\{J_1,\ldots,J_\ell\})$ satisfying $\mathcal U$ with $|I_1|=h_1$, 
	then it has one where each column $j$
	is assigned to either $J_{s_{j,1}}$ or $J_{s_{j,2}}$.
\end{Lemma}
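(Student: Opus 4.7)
The plan is a short case analysis on the range $\max V_j - \min V_j$ of values occurring in column $j$, where $V_j=\{a_{ij}\mid i\in[m]\}$. Because $c=1$ and $|u_{1s}-u_{2s}|=1$, a column block $s$ accepts exactly the values $\{v_s,v_s+1,v_s+2\}$ with $v_s=\min(u_{1s},u_{2s})$; so column $j$ can sit in $s$ only if $V_j\subseteq\{v_s,v_s+1,v_s+2\}$, which already forces $\max V_j-\min V_j\le 2$. Since the columns of $\mathcal U$ are pairwise distinct, for every $w$ there are at most two columns of $\mathcal U$ whose $(u_{1s},u_{2s})$ equals $(w,w+1)$ or $(w+1,w)$.

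I would first dispose of the degenerate ranges. If $\max V_j-\min V_j=2$, the only admissible value of $v_s$ is $\min V_j$, yielding at most two candidate columns in $\mathcal U$; I take their indices (duplicating if only one exists) as $s_{j,1},s_{j,2}$. If column $j$ is constant with value $v$, then since $0<h_1<m$ both $I_1$ and $I_2$ contain rows of the column, so $v$ must lie in $[u_{1s},u_{1s}+1]\cap[u_{2s},u_{2s}+1]=\{\max(u_{1s},u_{2s})\}$; again at most two columns of $\mathcal U$ qualify.

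The interesting case is when column $j$ has values $\{v,v+1\}$, with $n_v$ and $n_{v+1}=m-n_v$ the respective row counts. Up to four candidates in $\mathcal U$ arise, identified by $(u_{1s},u_{2s})$: $A=(v,v+1)$, $B=(v-1,v)$, $C=(v+1,v)$, $D=(v,v-1)$. The core observation is that each forcing, combined with $|I_1|=h_1$, rules most of them out: $A$ forces every $v$-row of column $j$ into $I_1$ and so requires $h_1\ge n_v$; $B$ forces every $(v+1)$-row into $I_2$ and so requires $h_1\le n_v$; symmetrically $C$ requires $h_1\le n_{v+1}$ and $D$ requires $h_1\ge n_{v+1}$. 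Thus at most one of $\{A,B\}$ and at most one of $\{C,D\}$ is ever feasible, and I set $s_{j,1}$ to the index in $\mathcal U$ of the feasible type-1 candidate and $s_{j,2}$ to that of the feasible type-2 candidate (duplicating one when the other type is absent in $\mathcal U$).

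The main obstacle is the tie $h_1=n_v$, where both $A$ and $B$ satisfy the feasibility inequality. Here $|I_1|=n_v$ combined with $A$'s forcing makes $I_1$ equal to the set of $v$-rows of column $j$ exactly, which automatically places all $(v+1)$-rows in $I_2$, i.e., $B$'s condition; so $A$ and $B$ are jointly fit or jointly not, and if a solution uses one of them it can be trivially re-assigned to the other without changing $\I$ or the other column assignments. The symmetric argument handles $h_1=n_{v+1}$. Finally, no two candidates of different types can fit under a common $\I$ (for instance, $A$ and $C$ would require the $v$-rows to lie entirely in both $I_1$ and $I_2$; $A$ and $D$ would place every row in $I_1$), so under any valid $\I$ the set of candidates hosting column $j$ is contained in $\{A,B\}$ or in $\{C,D\}$ and hence meets $\{s_{j,1},s_{j,2}\}$; after at most one trivial swap per column, $j$ lands in $J_{s_{j,1}}$ or $J_{s_{j,2}}$. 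Computing $V_j,n_v,n_{v+1}$ and looking up the relevant columns of $\mathcal U$ takes $O(m+\ell)\subseteq O(mn)$ time per column.
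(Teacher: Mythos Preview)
Your argument is correct and follows essentially the same route as the paper: a case split on the range of values in column~$j$, with the two-value case handled by listing the four candidate columns of~$\mathcal U$, pairing them as $\{A,B\}$ and $\{C,D\}$ (exactly the paper's pairs $\{s_1,s_3\}$ and $\{s_2,s_4\}$), using the count constraints against $h_1$ to pick one from each pair, and invoking the same swap argument in the tie $h_1=n_v$ (respectively $h_1=n_{v+1}$) where both members of a pair force the identical row partition. Your extra remark that cross-type candidates cannot simultaneously be compatible with a common~$\I$ is true but not needed, and the per-column $O(m+\ell)$ cost should be tightened to $O(m)$ via a constant-time lookup in~$\mathcal U$ to meet the stated $O(mn)$ bound.
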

\begin{proof}
  We write $h_2=m-h_1$ ($h_2=|I_2|>0$ for any solution with $h_1=|I_1|$).
  Given a column $j\in[n]$ and any element $a\in\Sigma$, we write
  $\occ{a}{j}$ for the number of entries with value $a$ in column~$j$.

  Consider a column block $J_s\subseteq[n]$, $s\in[\ell]$. Write $\alpha,\beta,\gamma$ for the three values such that
  $U_{1s}\setminus U_{2s}=\{\alpha\}$, $U_{1s}\cap U_{2s}=\{\beta\}$ and   $U_{2s}\setminus U_{1s}=\{\gamma\}$.
  Note that $\{\alpha,\beta,\gamma\}=\{\beta-1,\beta,\beta+1\}$.
  We say that column $j$ \emph{fits} into column block $J_s$
  if the following three conditions hold:
  \begin{compactenum}[(i)]
  \item\label{cond_i}  $\occ{x}{j}=0$ for any $x\notin \{\alpha,\beta,\gamma\}$,
  \item\label{cond_ii}  $\occ{\alpha}{j} \leq h_1$, and
  \item\label{cond_iii} $\occ{\gamma}{j}\leq h_2$.
  \end{compactenum}
  Note that if Condition~(\ref{cond_i}) is violated, then the column contains an
  element which is neither in~$U_{1s}$ nor in~$U_{2s}$.
  If Condition~(\ref{cond_ii}) (respectively~(\ref{cond_iii})) is violated, then there are more
  than~$h_1$ (respectively~$h_2$) rows that have to be in row block~$I_1$
  (respectively~$I_2$).
  Thus, if $j$ does not fit into a column block $J_s$, then there is no solution where $j\in J_s$.
  We now need to find out, for each column, to which fitting column
  blocks it should be assigned.
	
  Intuitively, we now prove that in most cases a column has at most two fitting column blocks, and, in the remaining cases, 
	at most two pairs of ``equivalent'' column blocks.
	
  Consider a given column $j\in[n]$. Write $a=\min\{a_{ij}\mid i\in [m]\}$ and  $b=\max\{a_{ij}\mid i\in [m]\}$.
  If $b\geq a+3$, then Condition~(\ref{cond_i}) is always violated: $j$~does not fit into any column block, and the instance is a no-instance.
  If $b=a+2$, then, again by Condition~(\ref{cond_i}), $j$~can only fit into a column block where $\{u_{1s},u_{2s}\}=\{a,a+1\}$.
  There are at most two such column blocks: we write~$s_{j,1}$ and~$s_{j,2}$ for their indices 
	($s_{j,1}=s_{j,2}$ if a single column block fits).
  The other easy case is when $b=a$, i.e., all values in column $j$ are equal to $a$.
  If $j$ fits into column block $J_s$, then, with Conditions~(\ref{cond_ii}) and~(\ref{cond_iii}), $a\in U_{1s}\cap U_{2s}$, and $J_s$ is one of the at most two
  column blocks having $\beta=a$: again, we write $s_{j,1}$ and $s_{j,2}$ for their indices.

  Finally, consider a column $j$ with $b=a+1$, and let $s\in[\ell]$ be
  such that~$j$~fits into~$J_s$.
  Then, by Condition~(\ref{cond_i}),  the ``middle-value'' for column block $J_s$
  is $\beta\in\{a,b\}$.
  The pair $(u_{1s},u_{2s})$ must be from $\{(a-1,a),
  (a,a-1), (a,b),(b,a)\}$. We write $J_{s_1}, \ldots ,J_{s_4}$ for the
  four column blocks (if they exist) corresponding to these four
  cases.	
	We define $s_{j,1}=s_1$  if $j$ fits into $J_{s_1}$, and $s_{j,1}=s_3$ otherwise.
	Similarly, we define $s_{j,2}=s_2$  if $j$ fits into $J_{s_2}$, and $s_{j,2}=s_4$ otherwise.
	
  Consider a solution assigning $j$ to $s^*\in\{s_1,s_3\}$, with $s^*\neq s_{j,1}$. Since $j$
	must fit into $J_{s^*}$, the only possibility is that $s^*=s_3$ and $s_{j,1}=s_1$.	
	Thus, $j$ fits into both $J_{s_1}$ and $J_{s_3}$, 	
  so Conditions~(\ref{cond_ii}) and~(\ref{cond_iii}) imply $\occ aj\leq h_1$ and  $\occ bj \leq h_2$. Since $\occ
  aj+\occ bj = h_1+ h_2=m$, we have $\occ aj=h_1$ and  $\occ
  bj=h_2$. Thus, placing $j$ in either column block yields the same
  row partition, namely $I_1=\{i\mid a_{ij}=a\}$ and $I_2=\{i\mid
  a_{ij}=b\}$. Hence, the solution assigning $j$ to~$J_{s_3}$, 
	can assign it to $J_{s_1}=J_{s_{j,1}}$ instead without any further need for 
	modification.	
		
	Similarly with $s_2$ and $s_4$, any solution assigning $j$ to $J_{s_2}$ or $J_{s_4}$
	can assign it to $J_{s_{j,2}}$ without any other modification. Thus, 
	since any solution must assign $j$ to one of $\{ J_{s_1}, \ldots , J_{s_4}\}$,
	it can assign it to one of $\{J_{s_{j,1}},J_{s_{j,2}}\}$ instead.
  \end{proof}

We now give the proof of~\autoref{thm:2l-cost1-FPT}.

\begin{proof}
  Let~$I=(\matrixI\in\Sigma^{m\times n},k=2,\ell,c=1)$ be a
  \klpCoCluProb{2}{*}{\infty} instance.
  The proof is by induction on~$\ell$.
  For $\ell=1$, the problem is solvable in~$O(n(m+\log n))$ time
  (\autoref{thm:1l-poly}).
  We now consider general values of $\ell$.
  Note that if~$\ell$ is large compared to $m$ (that is, $2^m < |\Sigma|^\ell$),
  then one can directly guess the row partition and run the algorithm of
  \autoref{cor:FPT-mklc}.
  Thus, for the running time  bound, we now assume that $\ell<m$.
  By \autoref{obs:integer-values} we can assume that~$\Sigma\subset \mathbb Z$.

  Given a \CoClu{2,\ell}~$(\I=\{\{1\},\{2\}\},\J)$, a \clName{} boundary $\mathcal
  U=(u_{rs})$ satisfied by~$(\I,\J)$, and $U_{rs}=[u_{rs},u_{rs}+c]$, each column
  block~$J_s\in \J$ is said to be
  \begin{compactitem}
  \item with \emph{equal} bounds if $U_{1s}=U_{2s}$,
  \item with \emph{non-overlapping} bounds if $U_{1s}\cap U_{2s}=\emptyset$,
  \item with \emph{properly overlapping} bounds otherwise.
  \end{compactitem}

  \noindent We first show that instances implying a solution containing at least
  one column block with equal or non-overlapping bounds can easily be dealt with.

\begin{Claim}\label{claim:eqbounds}
If the solution contains a column-block with equal bounds, then
it can be computed in~$O(|\Sigma|^{2\ell}n^2m^2)$ time.
\end{Claim}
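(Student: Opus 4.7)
The plan is to guess the value $v \in \Sigma$ of an equal-bounds column block (together with its index $s^* \in [\ell]$, for at most $O(\ell|\Sigma|)$ choices), and then reduce to a \klpCoCluProb{2}{*}{\infty} instance on strictly fewer column blocks. For each guess, I would first compute, in $O(nm)$ time, the set $F_v := \{j \in [n] \mid a_{ij} \in [v, v+1] \text{ for all } i \in [m]\}$ of columns that are eligible to be placed into an equal-bounds column block with value~$v$.

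The key structural observation is that a column $j \in F_v$ can always be assigned to $J_{s^*}$ without constraining the row partition (both intervals $[u_{1s^*}, u_{1s^*}+1]$ and $[u_{2s^*}, u_{2s^*}+1]$ coincide with $[v, v+1]$) and without increasing the total cost. A straightforward exchange argument then shows that any feasible \CoClu{2,\ell} whose equal-bounds block has value~$v$ can be transformed into a \CoClu{2,\ell} in which the columns of $F_v$ form a union of (possibly several) equal-bounds sub-blocks, all with value~$v$, while the columns of $[n]\setminus F_v$ occupy the remaining column blocks. Such splitting of $F_v$ is always permitted since each resulting sub-block is trivially of cost at most~$1$.

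Consequently, for each guess it suffices to find a $(2,\ell')$-co-clustering of cost at most~$1$ on the submatrix induced by $[n]\setminus F_v$ for some $\ell' \in \{\max(1,\ell-|F_v|),\ldots,\ell-1\}$, and to pad the result with $\ell-\ell'$ equal-bounds sub-blocks drawn from $F_v$. Each residual instance has strictly fewer column blocks and is solved by the induction hypothesis of \autoref{thm:2l-cost1-FPT}, with the base case $\ell'=1$ handled by \autoref{thm:1l-poly}. The main obstacle will be a careful running-time analysis: one must correctly detect infeasible guesses (e.g., when $F_v = \emptyset$), iterate over the admissible values of $\ell'$, and balance the $O(\ell|\Sigma|)$ guesses against the inductive sub-call cost in order to match the claimed bound $O(|\Sigma|^{2\ell} n^2 m^2)$.
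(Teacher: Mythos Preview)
Your proposal is correct and follows essentially the same route as the paper: guess the value~$v$ of the equal-bounds block, observe that every column whose entries lie in~$[v,v+1]$ can be placed there without constraining the row partition, and recurse on the remaining columns with one fewer column block. The paper simply fixes $s^*=\ell$ without loss of generality (so your guess of the index~$s^*$ is superfluous), greedily assigns \emph{all} of~$F_v$ to that single block, and invokes induction on the resulting \klpCoCluProb{2}{\ell-1}{\infty} instance; your extra loop over~$\ell'$ and the splitting of~$F_v$ into several sub-blocks only serve to handle the edge case $|[n]\setminus F_v|<\ell-1$, which the paper glosses over.
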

\begin{proof}
  Assuming without loss of generality, that the last column block,~$J_\ell$, 
	has equal bounds.
  We try all possible values of~$u=u_{1\ell}$.
  Note that column block~$J_\ell$ imposes no restrictions on the row
  partition. Hence, it can be determined independently of the rest of
  the co-clustering.
  More precisely, any column with all values in
  $U_{1\ell}=U_{2\ell}=[u, u+c]$ can be put into this block,
  and all other columns have to end up in the~$\ell-1$ other blocks,
  thus forming an instance of \klpCoCluProb{2}{\ell-1}{\infty}.
  By induction each of these cases can be tested in
  $O(|\Sigma|^{2(\ell-1)}n^2m(\ell-1))$ time.
  Since we test all values of~$u$, this procedure finds a solution with a column
  block having equal bounds in
  $O(|\Sigma|\cdot|\Sigma|^{2(\ell-1)}n^2m(\ell-1))=O(|\Sigma|^{2\ell}n^2m^2)$ time.
	\end{proof}
\begin{Claim}
If the solution contains a (non-empty) column-block with non-overlapping bounds, then
it can be computed in~$O(|\Sigma|^{2\ell}n^2m^2)$ time.
\end{Claim}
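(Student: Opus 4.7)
The plan is to mirror the equal-bounds strategy, exploiting the stronger structural consequence of having non-overlapping bounds. I assume without loss of generality that the non-overlapping column block in the solution is $J_\ell$, and enumerate the $O(|\Sigma|^2)$ candidate pairs $(u_{1\ell}, u_{2\ell})\in\Sigma^2$ with $|u_{1\ell}-u_{2\ell}|\geq 2$. For each candidate I will argue that after guessing a single ``witness'' column of $J_\ell$ the row partition is uniquely determined, thereby reducing the remainder of the search to a fixed-row-partition subproblem of the kind handled by \autoref{cor:FPT-mklc}.

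The pivotal observation is the following. Consider any solution satisfying the guessed boundary and any column $j\in J_\ell$. For each row $i$ the entry $a_{ij}$ lies in $U_{1\ell}=[u_{1\ell},u_{1\ell}+1]$ if $i\in I_1$, and in $U_{2\ell}=[u_{2\ell},u_{2\ell}+1]$ if $i\in I_2$. Since $U_{1\ell}\cap U_{2\ell}=\emptyset$ and both $I_1$ and $I_2$ are non-empty, every entry of column $j$ lies in exactly one of the two ranges and unambiguously identifies its row's block. Hence fixing any $j\in J_\ell$ together with the boundary pins down the row partition as $I_1=\{i:a_{ij}\in U_{1\ell}\}$ and $I_2=\{i:a_{ij}\in U_{2\ell}\}$.

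Guided by this, the algorithm enumerates the $O(|\Sigma|^2\cdot n)$ pairs consisting of a candidate boundary and a witness column, computes the induced row partition in $O(m)$ time, and discards the guess if some entry of $j$ falls outside $U_{1\ell}\cup U_{2\ell}$ or if one of $I_1,I_2$ is empty. Otherwise it invokes a variant of \autoref{cor:FPT-mklc} in which both the row partition and the boundary of $J_\ell$ are already fixed: iterate over the remaining $|\Sigma|^{2(\ell-1)}$ cluster boundaries for $J_1,\ldots,J_{\ell-1}$ and, for each, test in $O(mn\ell)$ time whether every column fits into at least one column block. This yields total running time $O(|\Sigma|^2 n \cdot |\Sigma|^{2(\ell-1)} mn\ell)=O(|\Sigma|^{2\ell}mn^2\ell)$, which is within $O(|\Sigma|^{2\ell}n^2m^2)$ under the running hypothesis $\ell<m$ assumed in the outer proof of \autoref{thm:2l-cost1-FPT}.

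The main point I expect to scrutinise is correctness: the ``fits somewhere'' subroutine only checks that each column has at least one admissible block, whereas a valid co-clustering must partition the columns into exactly $\ell$ non-empty blocks. This is handled by observing that the witness column forces $J_\ell$ to be non-empty, and that any ``fits somewhere'' assignment can be converted into a partition into exactly $\ell$ non-empty blocks whenever $n\geq \ell$ by splitting blocks of size at least two, an operation that never increases the cost. Conversely, a genuine solution with non-overlapping $J_\ell$ provides a witness column, a matching boundary in our enumeration, and boundaries $(u_{1s},u_{2s})=(\min \matrixI_{1s},\min \matrixI_{2s})$ for the remaining blocks, all of which appear in the enumeration, so some iteration of the algorithm succeeds.
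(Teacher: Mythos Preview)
Your argument is essentially the same as the paper's: guess the boundary of the non-overlapping block and a witness column in that block, read off the forced row partition from the disjoint intervals, and then fall back on \autoref{cor:FPT-mklc}. The only minor differences are that the paper guesses just $u_{2s}$ (splitting rows by $a_{ij}<u$ versus $a_{ij}\ge u$) rather than the full pair $(u_{1\ell},u_{2\ell})$, and that you add an explicit paragraph handling the ``exactly~$\ell$ non-empty blocks'' technicality, which the paper leaves implicit.
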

\begin{proof}
  Write $s$ for the index of the column block $J_s$ with
  non-overlapping bounds, and assume that, without loss of generality,
  $u_{1s}+c<u_{2s}$.
  We try all possible values of~$u=u_{2s}$, and we examine each
  column $j\in [n]$. We remark that the row partition is entirely
  determined by column~$j$ if it belongs to column block~$J_s$.
  That is, if $j\in J_s$, then $I_1=\{i\mid a_{ij}< u\}$ and $I_2=\{i\mid a_{ij}\geq u\}$.
  Using the algorithm described in \autoref{cor:FPT-mklc}, we
  deduce the column partition in $O(|\Sigma|^{2\ell-1}nm\ell)$ time,
	which is bounded by $O(|\Sigma|^{2\ell}n^2m^2)$.
\end{proof}

  We can now safely assume that the solution contains only  column
  blocks with properly overlapping bounds.
  In a first step, we guess the values of the \clName{} boundary
  $\mathcal U= (u_{rs})$. Note that, for each~$s\in[\ell]$, we only
  need to consider the cases where
  $0<|u_{1s}-u_{2s}|\leq c$, that is, for $c=1$, we have $u_{2s}=u_{1s}\pm 1$.
  Note also that, for any two distinct column blocks $J_s$ and
  $J_{s'}$, we have $u_{1s}\neq u_{1,s'}$ or  $u_{2s}\neq u_{2,s'}$.
  We then enumerate all possible values of $h_1=|I_1|>0$ (the
  \emph{height} of the first row block), and we write
  $h_2=m-h_1=|I_2|>0$.
  Overall, there are at most $(2|\Sigma|)^\ell m$ cases to consider.
	
	Using \autoref{lem:choiceAmong2}, we compute integers $s_{j,1}, s_{j,2}$ for each 
	column $j$ such that any solution satisfying the above conditions (\clName{} boundary
	$\mathcal U$ and $|I_1|=h_1$) can be assumed to assign each column $j$ to one of $J_{s_{j,1}}$ or $J_{s_{j,2}}$.
		
	We now introduce a \textsc{2\dash{}Sat} formula allowing us to simultaneously
  assign the rows and columns to the possible blocks.
  Let~$\phi_{\matrixI,\mathcal U}$ be the formula as provided by \autoref{lem:reduc-to-sat}.
  Create a formula $\phi'$ from $\phi_{\matrixI,\mathcal U}$ where,
  for each column $j\in [n]$, the column clause $C_j$ is replaced by
  the smaller clause $C'_j:=(y_{j,s_{j,1}} \vee y_{j,s_{j,2}})$.
  Note that $\phi'$ is a \textsc{2\dash{}Sat} formula since all other clauses $R_i$
  or~$B_{ijrs}$ already contain at most two literals.

  If $\phi'$ is satisfiable, then $\phi_{\matrixI,\mathcal U}$ is satisfiable
  and~$\matrixI$ admits a \CoClu{2,\ell} satisfying~$\mathcal U$.
  Conversely, if~$\matrixI$ admits a \CoClu{2,\ell} satisfying~$\mathcal U$
  with~$|I_1|=h_1$, then, by the discussion above, there exists a
  co-clustering where each column~$j$ is in one of the column
  blocks~$J_{s_{j,1}}$ or~$J_{s_{j,2}}$.
  In the corresponding Boolean assignment, each clause of
  $\phi_{\matrixI,\mathcal U}$ is satisfied and each new column
  clause of~$\phi'$ is also satisfied. Hence, $\phi'$ is satisfiable.
  Overall, for each \clName{} boundary~$\mathcal U$ and each~$h_1$, we
  construct and solve the formula~$\phi'$ defined above.
  The matrix~$\matrixI$ admits a \CoClu{2,\ell} of cost~1 if and only if $\phi'$ is
  satisfiable for some~$\mathcal U$ and~$h_1$.

  The running time for constructing and solving the formula~$\phi'$, for any fixed \clName{}
  boundary~$\mathcal U$ and any height~$h_1\in[m]$, is in~$O(nm)$,
  which gives a running time of $O((2|\Sigma|)^{\ell}nm^2)$ for this last part.
  Overall, the running time is thus $O(|\Sigma|^{2\ell}n^2m^2 +
  |\Sigma|^{2\ell}n^2m^2 + (2|\Sigma|)^{\ell}nm^2) = O(|\Sigma|^{2\ell}n^2m^2)$.
\end{proof}
Finally, we obtain the following simple corollary.

\begin{Corollary}
\label{cor:2l-FPT}
  \klpCoCluProb{2}{*}{\infty} with~$c=1$ is fixed-parameter tractable with
  respect to parameter~$|\Sigma|$ and with respect to parameter~$\ell$.
\end{Corollary}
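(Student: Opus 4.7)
The plan is to derive both fixed-parameter tractability claims as immediate consequences of \autoref{thm:2l-cost1-FPT}, whose runtime $O(|\Sigma|^{3\ell}n^2m^2)$ is already FPT in the combined parameter $(|\Sigma|,\ell)$. For each single parameter, I only need to bound the other in terms of it (or safely reject the instance up front).

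For parameter $\ell$, I would reuse the observation from the proof of \autoref{cor:FPT-mklc}: in any cost-$c$ co-clustering, the $k\ell$ clusters together cover every value of the input, and each cluster contains at most $c+1$ distinct values. Specialized to $k=2$ and $c=1$, this gives $|\Sigma|\le 4\ell$ on every yes-instance. Thus the algorithm first rejects if $|\Sigma|>4\ell$; otherwise it invokes \autoref{thm:2l-cost1-FPT}, whose runtime becomes $O((4\ell)^{3\ell}n^2m^2)$, which is FPT in $\ell$.

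For parameter $|\Sigma|$, I would bound $\ell$ by a function of $|\Sigma|$ via a merging-and-splitting argument. A cluster boundary restricted to a column block~$J_s$ is a pair $(u_{1s},u_{2s})\in\Sigma^2$, so at most $|\Sigma|^2$ distinct boundaries exist. If some $(2,\ell')$-co-clustering of cost at most~$1$ uses $\ell'>|\Sigma|^2$ column blocks, then by pigeonhole two blocks $J_s,J_{s'}$ share the same boundary; merging them yields a valid $(2,\ell'-1)$-co-clustering of the same cost (the union of two sets already lying in the same length-$c$ interval stays in that interval). Iterating drives the number of column blocks down to at most~$|\Sigma|^2$. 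Conversely, whenever $\ell\le n$, any $(2,\ell')$-co-clustering with $\ell'<\ell$ can be extended to a $(2,\ell)$-co-clustering of the same cost by repeatedly splitting a column block of size at least two into two non-empty pieces (a subset of a low-diameter set stays low-diameter). Consequently the input is a yes-instance iff $\ell\le n$ and a $(2,\ell')$-co-clustering of cost at most~$1$ exists for some $\ell'\le\min(\ell,|\Sigma|^2)$. Running \autoref{thm:2l-cost1-FPT} for each such~$\ell'$ costs $O(|\Sigma|^2\cdot|\Sigma|^{3|\Sigma|^2}n^2m^2)$, which is FPT in $|\Sigma|$.

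There is no real obstacle; both parts rest entirely on facts already in place (the alphabet bound from \autoref{cor:FPT-mklc} and the runtime of \autoref{thm:2l-cost1-FPT}). The only sanity check is that the merge and split operations preserve the cost bound, which is immediate from the two interval observations noted above.
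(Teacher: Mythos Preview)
Your proposal is correct and follows essentially the same approach as the paper: both directions rest on the mutual bounds $|\Sigma|\le 4\ell$ (from the alphabet argument in \autoref{cor:FPT-mklc}) and $\ell\le|\Sigma|^2$ (from merging column blocks with identical boundary pairs), combined with the FPT runtime of \autoref{thm:2l-cost1-FPT}. Your treatment is slightly more explicit than the paper's---you spell out the splitting step needed to recover exactly $\ell$ non-empty blocks and the iteration over $\ell'\le\min(\ell,|\Sigma|^2)$---but the underlying argument is the same.
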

\begin{proof}
  \autoref{thm:2l-cost1-FPT} presents an FPT-algorithm with respect to the
  combined parameter $(|\Sigma|,\ell)$.
  For \klpCoCluProb{2}{*}{\infty} with~$c=1$,
  both parameters can be polynomially upper-bounded within each other. Indeed, $\ell<|\Sigma|^2$
  (otherwise there are two column blocks with identical \clName{} boundaries,
  which could be merged) and $|\Sigma|<2(c+1)\ell=4\ell$
  (each column block may contain two intervals, each covering at
  most $c+1$~elements).
\end{proof}

\section{Conclusion}

Contrasting previous theoretical work on polynomial-time approximation algorithms
\cite{ADK12, JSB09}, we started to closely investigate
the time complexity of exactly solving the NP-hard \LpCoCluProb[\infty]
problem, contributing a detailed view on its computational
complexity landscape.
Refer to \autoref{tab:results} for an overview on most of our results.

Several open questions derive from our work.
Perhaps the most pressing open question is whether \klpCoCluProb{2}{*}{\infty} is polynomial-time solvable for larger constant-size alphabets.
So far, we only know that~\klpCoCluProb{2}{*}{\infty} is
linear-time solvable for ternary matrices (\autoref{thm:sigma-3-poly})
and quadratic-time solvable for every constant-size alphabet if~$c=1$ (\autoref{cor:2l-FPT}).
Another open question is the computational complexity of higher-dimensional
co-clustering versions, e.g. on three-dimensional tensors as input (the most basic case here corresponds to (2,2,2)-\LpCoCluProb[\infty], that is, partitioning each dimension into two subsets).
Indeed, other than the techniques for deriving
approximation algorithms \cite{ADK12, JSB09}, our exact methods
do not seem to generalize to higher dimensions.
Last but not least, we do not know whether \textsc{Consecutive}
\LpCoCluProb[\infty] is fixed-parameter tractable or~W[1]-hard with respect to the
combined parameter~$(k,\ell)$.

We conclude with the following more abstract vision on future research:
Note that for the maximum norm, the cost value~$c$ defines a ``conflict
relation'' on the values occurring in the input matrix. That is, for
any two numbers $\sigma,\sigma' \in \Sigma$ with  $|\sigma-\sigma'| > c$, we know that
they must end up in different \clName{}s. These conflict pairs completely
determine all constraints of a solution since all other pairs can be
grouped arbitrarily.
This observation can be generalized to a graph model.
Given a ``conflict relation'' $R\subseteq \binom{\Sigma}{2}$
determining which pairs are not allowed to be put together into a \clName{},
we can define the ``conflict graph''~$(\Sigma, R)$.
Studying co-clusterings in the context of such
conflict graphs and their structural properties
could be a promising and fruitful direction for future research.

\paragraph*{Acknowledgments.}
We thank Stéphane Vialette (Université Paris-Est Marne-la-Vallée) for stimulating discussions.

\bibliography{co-cluster}
\bibliographystyle{abbrvnat}

\end{document}